\newcommand{\len}[1]{|#1|}
\newcommand{\seq}[1]{\langle #1 \rangle}
\newcommand{\sk}[1]{\tilde{#1}}
\newcommand{\ske}{\sk{e}_i}
\newcommand{\skl}{\sk{L}}
\newcommand{\set}[1]{\{#1\}}
\newcommand{\mi}[1]{\mathit{#1}}
\begin{document}
\title{Ranking the Top-K Realizations of Stochastically Known Event Logs}
%
%
\author{Arvid Lepsien\inst{1}\orcidID{0000-0002-8105-382X} \and
Marco Pegoraro\inst{2}\orcidID{0000-0002-8997-7517} \and
Frederik Fonger\inst{1}\orcidID{0009-0000-8445-8104} \and
Dominic Langhammer\inst{3}\orcidID{0000-0002-0238-324X} \and
Milda Aleknonyt\.{e}-Resch\inst{1}\orcidID{0000-0003-0472-1262} \and
Agnes Koschmider\inst{3}\orcidID{0000-0001-8206-7636}}
\authorrunning{A. Lepsien et al.}
\institute{Department of Computer Science, Kiel University, Kiel, Germany\\
\email{\{ale, ffo, mar\}@informatik.uni-kiel.de} \and
Chair of Process and Data Science (PADS), RWTH Aachen University, Aachen, Germany\\
\email{pegoraro@pads.rwth-aachen.de} \and
Chair of Business Informatics and Process Analytics, University of Bayreuth, Bayreuth, Germany\\
\email{firstname.lastname@uni-bayreuth.de}}
\maketitle              
\begin{abstract}
Various kinds of uncertainty can occur in event logs, e.g., due to flawed recording, data quality issues, or the use of probabilistic models for activity recognition.
Stochastically known event logs make these uncertainties transparent by encoding multiple possible realizations for events.
However, the number of realizations encoded by a stochastically known log grows exponentially with its size, making exhaustive exploration infeasible even for moderately sized event logs.
Thus, considering only the top-K most probable realizations has been proposed in the literature.
In this paper, we implement an efficient algorithm to calculate a top-K realization ranking of an event log under event independence within O(Kn), where n is the number of uncertain events in the log.
This algorithm is used to investigate the benefit of top-K rankings over top-1 interpretations of stochastically known event logs.
Specifically, we analyze the usefulness of top-K rankings against different properties of the input data. 
We show that the benefit of a top-K ranking depends on the length of the input event log and the distribution of the event probabilities.
The results highlight the potential of top-K rankings to enhance uncertainty-aware process mining techniques.

\keywords{Event Data \and Uncertainty \and Top-K Ranking \and Algorithm}
\end{abstract}
\section{Introduction}
\label{introduction}

Process mining is a family of techniques for analyzing event logs, providing data-driven insights and enabling objectively informed decision making \cite{van_der_aalst_process_2012}.
However, a notable challenge arises when event logs contain events affected by uncertainty~\cite{koschmider_process_2024}.
This uncertainty can emerge for various reasons, such as through systems applying probabilistic models for activity recognition~\cite{engelberg_uncertainty-aware_2023}\cite{lepsien_analytics_2023}. 
To manage this uncertainty, stochastically known event logs have been introduced~\cite{gal_everything_2023}, which encode multiple possible realizations for each of their events.
Despite this, most current process mining techniques do not account for this uncertainty, and instead rely only on the most probable (top-1) interpretation of the events.

To illustrate this, let us consider an example process of patient treatment in a hospital where activities are extracted from medical note systems and handwritten notes. 
Techniques such as handwriting recognition and natural language processing are used to detect these activities~(e.g.,~\cite{kecht_event_2021}).
Table~\ref{tbl:skl} shows an example of a stochastically known event log from this setting. 
In this example, the physician begins the treatment by establishing the medical history of the patient (activity \textit{H}). 
The log then shows uncertainty regarding the diagnosis of either light pain (activity \textit{L}) or severe pain (activity \textit{S}).
Similarly, there is uncertainty in the subsequent event, where either ibuprofen (activity \textit{I}) or opiates (activity \textit{O}) were prescribed. Table~\ref{tbl:realizations} lists all possible realizations of this log with their realization probabilities.
While the most probable realization $\langle H, L, I \rangle$ complies with the hospital's guidelines, the second and third most probable realizations $\langle H, L, O \rangle$ and $\langle H, S, I \rangle$ -- which in sum are more probable than the first realization -- both hint at compliance issues.
If only the most probable log realization was considered, these issues would be overlooked.

While the complete set of possible realizations can be calculated and analyzed efficiently for this log with few uncertain events, the computational effort grows exponentially as the size of the event log increases.
This raises the need for efficient techniques to prioritize the realizations of uncertain logs.
One initial approach suggests considering only the top-$K$ most probable log realizations~\cite{gal_everything_2023}. 
However, neither an efficient algorithm to calculate the top-$K$ realizations of a stochastically known log nor an evaluation demonstrating the feasibility of top-$K$ rankings for uncertainty-aware process mining techniques have been presented yet.
These issues are addressed in this paper, guided by the following research questions (RQs):
\begin{itemize}
    \item \textbf{(RQ1)} How to efficiently calcuate top-$K$ rankings?
    \item \textbf{(RQ2)} What is the benefit of top-$K$ rankings over top-$1$ interpretations in terms of covered probability mass?
    \item \textbf{(RQ3)} How well can top-$K$ rankings represent the variability of the possible log realizations encoded by a stochastically known log?
\end{itemize}
To answer these RQs, we design a basic top-$K$ algorithm for stochastically known logs.
This algorithm is based on a generalized procedure that iteratively partitions the set of possible log realizations~\cite{hamacher_k_1985}.
We then apply the algorithm to simulated stochastically known logs to evaluate the potential of top-$K$ rankings to improve the understanding of uncertainty-aware event data.
The remainder of the paper is structured as follows.
Sec.~\ref{preliminaries} introduces the basic notations.
Sec.~\ref{related_work} discusses related literature.
The algorithm to produce the top-$K$ realizations of a stochastically known event log is presented in Sec.~\ref{algorithm}.
Then, the efficiency of this algorithm and the general potential of top-$K$ rankings to improve the understanding of stochastically known event logs are evaluated in Sec.~\ref{sec:evaluation}.
The paper concludes with a summary and an outlook in Sec.~\ref{conclusion}.

\section{Preliminaries}
\label{preliminaries}

In this section, the definitions and notations used in the paper (mostly based on~\cite{pegoraro_conformance_2021}) are summarized. 
First, the universes are defined, which are then used to formally define event logs.

\begin{definition}[Universes~\cite{pegoraro_conformance_2021}]
    Let $\mathcal{U}_I$ be the universe of event identifiers.
    Let $\mathcal{U}_C$ be the universe of case identifiers.
    Let $\mathcal{U}_A$ be the universe of activities and let $\mathcal{U}_T$ be the totally ordered set of timestamp identifiers.
\end{definition}

\begin{definition}[Event, event log \cite{pegoraro_conformance_2021}]
    We denote with $\mathcal{E}_C = \mathcal{U}_I \times \mathcal{U}_C \times \mathcal{U}_T \times \mathcal{U}_A$ the universe of deterministic events.
    A deterministic event log is a set of events $L_C \subseteq \mathcal{E}_C$ such that every event identifier in $L_C$ is unique.
\end{definition}
Next, the notion of an event is extended to encode uncertainty by replacing the single deterministic activity with a partial function matching multiple alternative activities and their corresponding confidence.
\begin{definition}[Stochastically known event, stochastically known event log \cite{pegoraro_conformance_2021}]\label{def:stoch_known_event}
    We denote with $\mathcal{E}_W = \set{(e_i, c, t, f) \in \mathcal{U}_I \times \mathcal{U}_C \times \mathcal{U}_T \times (\mathcal{U}_A \not\rightarrow [0, 1]) \mid \sum_{a \in \mi{dom}(f)} f(a) = 1}$ the universe of stochastically known events.
    A stochastically known (event) log is a set of stochastically known events $\skl \subseteq \mathcal{E}_W$ such that every event identifier in $\skl$ is unique.
\end{definition} 

An example case of a stochastically known log is shown in Table~\ref{tbl:skl}.
We use a tilde to distinguish between stochastically known events and event logs and their deterministic counterparts (e.g., $\ske \in \skl$ vs. $e_i \in L$). 
\begin{table}[!t]
\renewcommand{\arraystretch}{1.3}
\begin{minipage}{.45\columnwidth}
\caption{Exemplary stochastically known event log}
\label{tbl:skl}
\centering
\begin{tabular}{cccc}
\hline
Event    & Case     & t         & Activity \\ \hline
1        & 1        & 1         & H         \\ 
2        & 1        & 2         & \{(L, 0.7), (S, 0.3)\}                 \\ 
3        & 1        & 3         & \{(I, 0.6), (O, 0.4)\}         \\
\hline
\end{tabular}
\end{minipage}
\hfill
\begin{minipage}{.45\columnwidth}
\renewcommand{\arraystretch}{1.3}
\caption{Possible realizations of the log shown in table~\ref{tbl:skl}}
\label{tbl:realizations}
\centering
\begin{tabular}{cc}
\hline
$L \in R(\skl)$                     & $P_{\skl}(L)$ \\ \hline
$\seq{H, L, I}$         & 0.42 \\ 
$\seq{H, L, O}$         & 0.28 \\
$\seq{H, S, I}$         & 0.18 \\ 
$\seq{H, S, O}$         & 0.12 \\ 
\hline
\end{tabular}
\end{minipage}
\end{table}
Next, the realizations of stochastically known events and event logs are defined.

\begin{definition}[Realizations]
\label{def:realizations}
    For a stochastically known event $\ske = (i, c, t, f)$, the set of realizations is defined as $R(\ske) = \set{(i, c, t, a) \mid a \in \mi{dom}(f)}$.
    For a stochastically known log $\skl \subseteq \mathcal{E}_W$, the set of realizations is defined as $R(\skl) = \bigtimes_{\ske \in \skl} R(\ske)$.
\end{definition}

The realizations of the exemplary stochastically known event log are shown in Table~\ref{tbl:realizations}.
In the following, a tilde over a deterministic event is used to refer to the stochastically known event it stems from, e.g., $e_i \in R(\ske)$ with $\ske \in \skl$.
Finally, the probability of event and log realizations is defined.

\begin{definition}[Realization probability]
    Let $\ske = (i, c, t, f)$ be a stochastically known event and $e_i \in R(\ske)$ be a realization of $\ske$.
    Let $\skl$ be a stochastically known log and $L \in R(\skl)$ be a realization of $\skl$.
    Then, the probability of the event realization $e_i$ is defined as $P_{\ske}(e_i) = f(e_i)$ and the probability of the log realization is defined as $P_{\skl}(L) = \prod\limits_{e_i \in L}P_{\ske}(e_i)$.
\end{definition}

When context is clear, the subscript of the probability function may be omitted.
In this paper, the realization probability is calculated under the assumption that the probability of any event realization does not depend on the realizations selected for, e.g., preceeding events (\textit{event independence}~\cite{gal_everything_2023}).
In practical applications, a weaker assumption of \textit{trace independence} -- which assumes independent probabilities across cases, but allows for dependent probabilities inside cases -- might be essential and we plan to consider it in future research~\cite{gal_everything_2023}.
\section{Related Work}
\label{related_work}

The interest in uncertainty-aware process mining has increased in recent years.
This is reflected by taxonomies classifying uncertain event data~\cite{pegoraro_conformance_2021,gal_everything_2023} and 
various approaches incorporating uncertainty into process mining~\cite{pegoraro_probabilistic_2022,pegoraro_conformance_2021,gal_everything_2023,DBLP:journals/eaai/FelliGMRW23}.
These approaches either analyze the information of all realizations of uncertain logs, e.g., with task-specific aggregations such as lower and upper bounds for directly-follows relation frequencies~\cite{pegoraro_probabilistic_2022}, or they select a single representative realization and thus recover a deterministic trace~\cite{bogdanov_sktr_2023}.

Stochastically known event logs have been considered in analogy to probabilistic databases~\cite{gal_everything_2023}.
There, uncertainty can be managed by considering the $K$ most important answers or possible interpretations of the uncertain data~\cite{gal_everything_2023,soliman_probabilistic_2008}.
For instance, Soliman et al.~\cite{soliman_probabilistic_2008} present an efficient algorithm for top-$K$ queries on probabilistic databases that allows the consideration of \textit{generation rules} in the form of mutually exclusive tuples.
Besides, top-$K$ rankings have also proven to be useful in additional applications, e.g., for search engine results~\cite{qin_diversifying_2012}, or combinatorial optimization~\cite{pascoal_note_2003,hamacher_k_1985}.
The application of top-$K$ rankings for the analysis of uncertain event data has been proposed in the literature~\cite{gal_everything_2023}.
To the best of our knowledge, no top-$K$ algorithm specifically designed for stochastically known logs has been presented yet.
The work of Pegoraro et al.~\cite{pegoraro_probability_2022} could be implicitly considered as a computation for the top-$K$ realizations of a trace, but in essence it is an inefficient brute force method.
Gal proposed an efficient top-$K$ ranking algorithm for stochastically known logs relying on~\cite{hamacher_k_1985}, where a generalized top-$K$ ranking procedure for combinatorial optimization problems is presented.
Bogdanov et al.~\cite{bogdanov_sktr_2023} present SKTR, a technique that retrieves the best realization of a stochastically known trace using a specially constructed graph where the paths correspond to the realizations of the stochastically known trace and the edge weights are determined based on a cost function that considers (dependent) event realization probabilities and conformance to a reference process model.
The application of a $K$ shortest paths algorithm (e.g.,~\cite{gao_fast_2010}) on the graph could result in top-$K$ ranking on the trace level.
More generally, a wide range of top-K algorithms for various combinatorial problems are available and could be adapted for uncertain event data by mapping the event data to fit the structure of such a problem, e.g., as has been suggested for assignment ranking algorithms~\cite{gal_everything_2023}.

\section{Top-K Realization Ranking}
\label{algorithm}

To investigate the potential of top-$K$ rankings for uncertainty-aware process mining, we first design an algorithm to efficiently calculate the top-$K$ realization of a stochastically known event log.
An intuitive approach is to first calculate all possible realizations, and then sort the results by probability to retrieve the top-$K$ realizations.
However, since the number of possible realizations grows exponentially with the number of events in the log, this approach is affected by exponential complexity.
Another approach would be adapting one of the approaches summarized in Sec.~\ref{related_work}.
These approaches impose constraints in terms of dependent probabilities, generation rules or reference process models.
These constraints generally make the resulting top-$K$ rankings more accurate because they exclude realizations that are irrelevant or meaningless in the analysis context, and thereby concentrate the probability mass among fewer realizations -- which is highly beneficial for using the resulting rankings.
However, in this paper, we employ a basic algorithm operating under the assumption of event independence.
This is done to explore the general utility of top-$K$ rankings for uncertain event data.
If the baseline top-$K$ rankings produced under event independence show a clear benefit over top-1 interpretations, it indicates that more complex approaches which produce more refined top-$K$ rankings will likely offer further advantages.
Our goal is to demonstrate that even under the independence assumption, top-$K$ rankings outperform top-1 interpretations, laying the groundwork for the potential benefits of more advanced techniques.

In the following, we present an efficient algorithm which is based on a generalized ranking procedure suggested in the literature~\cite{hamacher_k_1985}, addressing \textbf{RQ1}.
This algorithm is then used to evaluate the benefit of the produced top-$K$ rankings to address \textbf{RQ2} and \textbf{RQ3}.

\subsection{Top-K ranking algorithm}

First, the generalized terms for top-$K$ problems from~\cite{hamacher_k_1985} are introduced.
Hamacher and Queyranne~\cite{hamacher_k_1985} define a top-$K$ ranking as follows:

\begin{definition}[Top-$K$ ranking]
\label{def:topK}
    Let $\mathcal{D}$ be a set.
    Let $K \in \set{1, \ldots, \len{\mathcal{D}}}$.
    Let $c: \mathcal{D} \rightarrow \mathbb{R}$.
    Then, $D_1, \ldots, D_K$ is a top-$K$ ranking iff $\forall_{D \in \mathcal{D} \setminus \set{D_1, \ldots, D_K}}\ c(D_1) \leq \ldots \leq c(D_K) \leq c(D)$.
\end{definition}

\noindent Based on this, Hamacher and Queyranne define the general structure of a top-$K$ problem as a set of feasible solutions:

\begin{definition}[Feasible solutions \cite{hamacher_k_1985}]
    Let $E$ be a finite set and let $\mathcal{D} \subseteq 2^E$ be a set of subsets of $E$.
    We refer to the elements $e \in E$ as choices, and to the elements $D \in \mathcal{D}$ as (feasible) solutions.
    For any $I, O \subseteq E$ with $I \cap O = \varnothing$, $\mathcal{D}_{I, O} = \set{D \in \mathcal{D} \mid I \subseteq D \land D \cap O = \varnothing}$ is the set of feasible solutions restricted by $I$ and $O$.
\end{definition}

Hamacher and Queyranne present a generalized algorithm, called the BST procedure, that iteratively partitions the set of possible log realizations~\cite{hamacher_k_1985}.
This reduces the problem to a set of sub-problems, which can be solved using two auxiliary algorithms: (1) \texttt{ALG-1P} that yields the globally best solution and (2) \texttt{ALG-R2P} that yields the (locally) second best solution within a restricted solution set $\mathcal{D}_{I, O}$.
We apply this procedure to design an algorithm for the top-$K$ realizations problem for stochastically known logs, which is defined as follows.

\begin{definition}[Realization ranking problem]
\label{def:solutions}
    Let $\skl$ be a stochastically known log.
    Let $E = \bigcup_{\ske \in \skl} R(\ske)$ be the set of choices, and $\mathcal{D} = R(\sk{L})$ be the set of feasible solutions of the top-$K$ realizations problem.
    Then, $L_1, \ldots, L_K$ is a top-$K$ ranking iff $\forall_{L \in \mathcal{D} \setminus \set{L_1, \ldots, L_K}}\ P(L_1) \geq \ldots \geq P(L_K) \geq P(L)$, with $L_i$ being the realization having rank $i$.
\end{definition}

Because the goal of our algorithm is to \textit{maximize} the realization probability (instead of \textit{minimizing} a cost function as in~\cite{hamacher_k_1985}), the comparison operators in the ranking are flipped with respect to Def.~\ref{def:topK}.
Next, \texttt{ALG-1P} and \texttt{ALG-R2P} algorithms for the realization ranking problem are presented.
Under \textit{event independence}, the globally best log realization $L_1 \in R(\skl)$ is simply the log realization where each stochastically known event $\ske \in \skl$ is realized as its most probable alternative~\cite{bogdanov_sktr_2023}, so $\texttt{ALG-1P}(\skl) = \bigcup_{\ske \in \skl}\ \arg\max_{e_i \in R(\ske)}P(e_i)$.
For \texttt{ALG-R2P}, the following property of the realization ranking problem is shown:

\begin{lemma}
\label{lemma:second_best}
    For all $\mathcal{D}_{I, O} \subseteq \mathcal{D}$, if a second best solution $L_q^2 \in \mathcal{D}_{I, O}$ with $q \in \set{1, \ldots, K}$ exists, there always exists a second best solution $L' \in \mathcal{D}_{I, O}$ that is different from the best solution $L_q \in \mathcal{D}_{I, O}$ in exactly one element, i.e., $\len{L' \setminus L_q} = \len{L_q \setminus L'} = 1$.
\end{lemma}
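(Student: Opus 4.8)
The plan is to exhibit, for every restricted set $\mathcal{D}_{I,O}$ with best solution $L_q$, an explicit candidate $L'$ obtained from $L_q$ by changing a single choice, and then to show that $L'$ attains the second-largest probability among the solutions in $\mathcal{D}_{I,O}$. First I would record the structure of $\mathcal{D}_{I,O}$ under event independence: every solution is a log realization, hence selects exactly one realization of each $\ske\in\skl$; since $I$ can contain at most one realization of each event and $I\cap O=\varnothing$, a stochastically known event is \emph{free} when no realization of it lies in $I$, and two solutions of $\mathcal{D}_{I,O}$ can differ only on free events, where the admissible selections are exactly the realizations not in $O$. Because $P(L)=\prod_{\ske\in\skl}P(L(\ske))$ (writing $L(\ske)$ for the realization of $\ske$ chosen by $L$), the best solution $L_q$ picks, at each event, a most probable admissible realization. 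I may also assume $P(L_q)>0$: otherwise every solution has probability $0$ and any one-choice modification of $L_q$ works trivially.

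Next I would construct $L'$. For every free event $\ske$ with at least two admissible realizations, let $b(\ske)$ be the largest and $s(\ske)$ the second largest probability among its admissible realizations, so $0<s(\ske)\le b(\ske)=P(L_q(\ske))$. The hypothesis that a second best solution of $\mathcal{D}_{I,O}$ exists forces $\len{\mathcal{D}_{I,O}}\ge 2$, hence at least one such free event exists, so I may pick an event $\skei{\star}$ maximizing the ratio $s(\ske)/b(\ske)$. Let $L'$ agree with $L_q$ on every event except $\skei{\star}$, where $L'$ selects a second most probable admissible realization of $\skei{\star}$. Then $L'$ is again a log realization, it contains $I$ (it agrees with $L_q$ off $\skei{\star}$, and $\skei{\star}$ is free) and avoids $O$ (the selections of $L_q$ avoid $O$, and $L'(\skei{\star})$ is admissible), so $L'\in\mathcal{D}_{I,O}$; moreover $L_q\setminus L'=\set{L_q(\skei{\star})}$ and $L'\setminus L_q=\set{L'(\skei{\star})}$, so $\len{L'\setminus L_q}=\len{L_q\setminus L'}=1$, and $P(L')/P(L_q)=s(\skei{\star})/b(\skei{\star})$.

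Finally I would verify that $L'$ is a second best solution, i.e.\ that $P(L')\ge P(L'')$ for every $L''\in\mathcal{D}_{I,O}\setminus\set{L_q}$. Such an $L''$ disagrees with $L_q$ on a non-empty set $S$ of free events, and by the factorization of $P$,
\[
\frac{P(L'')}{P(L_q)}=\prod_{\ske\in S}\frac{P(L''(\ske))}{P(L_q(\ske))}=\prod_{\ske\in S}\frac{P(L''(\ske))}{b(\ske)}.
\]
For each $\ske\in S$, the realization $L''(\ske)$ is admissible and distinct from a most probable one, so $P(L''(\ske))\le s(\ske)$ and every factor lies in $[0,1]$; since a product of numbers in $[0,1]$ is at most each of its factors, $P(L'')/P(L_q)\le s(\ske)/b(\ske)$ for each $\ske\in S$, and in particular $\le s(\skei{\star})/b(\skei{\star})=P(L')/P(L_q)$. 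Hence $P(L'')\le P(L')$, so $L'$ is a second best solution of $\mathcal{D}_{I,O}$ that differs from $L_q$ in exactly one choice, which proves the lemma. I expect the main obstacle to be precisely this last step: one must argue that no solution deviating from $L_q$ on \emph{several} events at once can beat the single best one-choice deviation, which is where selecting $\skei{\star}$ by the ratio criterion together with the ``product of factors in $[0,1]$'' bound does the work; the remainder is routine bookkeeping of the edge cases (events forced by $I$, free events with a single admissible realization, zero probabilities, and ties for the optimum, in which case $s(\skei{\star})/b(\skei{\star})=1$ and $P(L')=P(L_q)$).
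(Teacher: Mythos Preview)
Your argument is correct, but it proceeds along a different route than the paper. The paper starts from a given second-best solution $L_q^2$ that differs from $L_q$ in more than one element, picks any single $e_i'\in L_q^2\setminus L_q$, forms $L'=(L_q\setminus\{e_i\})\cup\{e_i'\}$, and then observes that continuing the remaining substitutions from $L'$ to $L_q^2$ can only multiply by further ratios $\le 1$, so $P(L_q)\ge P(L')\ge P(L_q^2)$; since $L_q^2$ is second best this forces $L'$ to be second best as well. In contrast, you do not use the assumed $L_q^2$ at all beyond guaranteeing that some free event has at least two admissible realizations: you \emph{construct} the optimal one-choice deviation $L'$ by maximizing the ratio $s(\ske)/b(\ske)$, and then compare it directly against every competitor $L''$ via the product bound. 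The paper's version is shorter and leans on the hypothesis; your version is more explicit and has the added benefit that the $L'$ you exhibit is exactly the output of \texttt{ALG-R2P}, so your proof simultaneously certifies the correctness of that subroutine. One small wording point: when you write ``in particular $\le s(\skei{\star})/b(\skei{\star})$'', note that $\skei{\star}$ need not lie in $S$; the step actually uses that $\skei{\star}$ maximizes the ratio over all free events, hence $s(\ske)/b(\ske)\le s(\skei{\star})/b(\skei{\star})$ for any $\ske\in S$.
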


\begin{proof}
    Let $L_q \in \mathcal{D}_{I, O}$ be a best solution in a restricted set of feasible solutions, and $L_q^2 \in \mathcal{D}_{I, O}$ be a second best solution with $\len{L_q^2 \setminus L_q} = \len{L_q \setminus L_q^2} > 1$.
    Without loss of generality, selecting any $e_i' \in L_q^2 \setminus L_q$ enables constructing a new solution $L' = (L_q \setminus \set{e_i}) \cup \set{e_i'}$, where clearly $\len{L' \setminus L_q} = \len{L_q \setminus L'} = 1$.
    Because it is constructed only out of choices from the valid solutions $L_q$ and $L_q^2$, $L'$ is also valid in $\mathcal{D}_{I, O}$.
    The constructed solution $L'$ substitutes one element in $L_q$, so $P(L')$ can be rewritten as $P(L') = P(L_q) \cdot \frac{P(e_i')}{P(e_i)}$.
    As $e_i \in L_q$ is a choice within the best solution, it is the most probable event realization, i.e., $P(e_i) \geq P(e_i')$, which implies $P(L_q) \geq P(L')$.
    By substituting the other elements out of $L_q^2 \setminus L_q$, the solution $(L_q \setminus (L_q \setminus L_q^2)) \cup (L_q^2 \setminus L_q) = L_q^2$ can be constructed step-wise from $L_q$.
    Through transitivity this implies $P(L_q) \geq P(L') \geq P(L_q^2)$.
\end{proof}

\begin{wrapfigure}[19]{r}{0.45\textwidth}
\vspace{2em}
\vspace{-\intextsep}
\begin{algorithm}[H]
\label{alg:r2p}
\SetCustomAlgoRuledWidth{0.45\textwidth}
\DontPrintSemicolon
\caption{\texttt{ALG-R2P}}
\KwData{$\skl, I, O, L_q$}
\KwResult{$L_q^2$}
$\rho_{\mi{max}} \gets 0$\;
\For{$e_i \in L_q \setminus I$}{
    $e_i' \gets \texttt{next}_{\skl}(e_i, \ske)$\;
    \If{$e_i' \in R(\ske) \setminus O$}{
        $\rho_i \gets \frac{P(e_i')}{P(e_i)}$\;
        \If{$\rho_i > \rho_{\mi{max}}$}{
            $\rho_{\mi{max}} \gets \rho_i$\;
            $e_{\mi{max}} \gets e_i$\;
            $e_{\mi{max}}' \gets e_i'$\;
        }
    }
    }

$L_q^2 \gets (L_q \setminus \set{e_{\mi{max}}}) \cup \set{e'_{\mi{max}}}$\;
\Return $L_q^2$\;
\end{algorithm}
\end{wrapfigure}
Using lemma~\ref{lemma:second_best}, the set of candidates for a restricted second best solution can be limited to the solutions in $\mathcal{D}_{I, O}$ that are different from the best solution in exactly one choice.
This is used to devise an algorithm \texttt{ALG-R2P} (where for any $\ske \in \skl$, $\texttt{next}_{\skl}(e_i, \ske)$ returns the next best event realization that is allowed in the current $\mathcal{D}_{I, O}$), which is shown in Alg.~\ref{alg:r2p}.
\texttt{ALG-R2P} compares the candidate solutions by (1) selecting an event realization $e_i$ in the best solution $L_q$, (2) retrieving its next best realization $e_i' \in R(\ske)$, and (3) calculating the substitution probability ratio $\rho_i = \frac{P(e_i')}{P(e_i)}$.
Then, the second best solution $L_q^2$ is the candidate solution with the highest substitution probability ratio $\rho_i$.

\noindent These auxiliary algorithms are used with the BST procedure from~\cite{hamacher_k_1985} to calculate the top-$K$ realizations of stochastically known logs.
Note that this algorithm combining event realizations under event independence is practically identical to an algorithm combining trace-level top-$K$ rankings into a log-level top-$K$ ranking under trace independence.
Thus, it could also be used to join the results of more refined trace-level top-$K$ techniques.

\section{Evaluation}
\label{sec:evaluation}

We evaluate the efficiency of the algorithm and the benefit of the top-$K$ rankings it produces in the following steps:

\begin{itemize}
    \item \textbf{EVAL1} (Efficiency): Does the algorithm improve the complexity bound, and which parameters affect the execution time? (\textbf{RQ1})
    \item \textbf{EVAL2} (Sensitivity Analysis): How are the results affected by different properties of the input data? (\textbf{RQ2, RQ3})
\end{itemize}

\subsection{\textbf{EVAL1}: Efficiency}

\noindent For EVAL1, we formally prove an upper bound for the complexity of the algorithm, starting with the auxiliary algorithms.

\begin{lemma}
\label{lemma:1p_complexity}
    \texttt{ALG-1P} is bounded by $O(\len{\skl})$
\end{lemma}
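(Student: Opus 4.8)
The plan is to read the bound off the closed-form characterization of \texttt{ALG-1P} stated just above the lemma, namely $\texttt{ALG-1P}(\skl) = \bigcup_{\ske \in \skl}\ \arg\max_{e_i \in R(\ske)}P(e_i)$. This exhibits the computation as $\len{\skl}$ independent tasks: for each stochastically known event $\ske \in \skl$, select its most probable realization, and then assemble these selections into the output log realization. So the argument reduces to bounding the cost of a single per-event selection together with the cost of the final assembly.

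First I would argue that the per-event selection runs in constant time. By Def.~\ref{def:stoch_known_event}, each $\ske = (i, c, t, f)$ carries a probability function $f$ whose domain $\mi{dom}(f)$ has size bounded independently of the log (bounded by $\len{\mathcal{U}_A}$, and in any concrete encoding by a fixed small constant, as in the running example); equivalently, one may keep the alternatives of each event stored in order of decreasing probability as part of the input representation. Under either reading, computing $\arg\max_{e_i \in R(\ske)}P(e_i) = \arg\max_{a \in \mi{dom}(f)} f(a)$ takes $O(1)$. I would then observe that collecting the $\len{\skl}$ chosen realizations into a deterministic log costs a further $O(\len{\skl})$.

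Summing over all events yields $\len{\skl}$ iterations at $O(1)$ each, plus $O(\len{\skl})$ for assembly, hence $O(\len{\skl})$ overall, which is the claimed bound.

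The only point that needs care — and the expected main obstacle — is making the treatment of the per-event branching factor precise. Without a constant bound on $\len{\mi{dom}(f)}$, the honest statement is $O\bigl(\sum_{\ske \in \skl} \len{R(\ske)}\bigr)$, which collapses to $O(\len{\skl})$ exactly under the above assumption; I would state this convention explicitly and note that it is the same one implicit in the paper's overall $O(Kn)$ claim, where the number of alternatives per event is likewise absorbed into the constant.
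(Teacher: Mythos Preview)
Your proposal is correct and follows essentially the same approach as the paper: both argue that \texttt{ALG-1P} performs $\len{\skl}$ per-event selections at $O(1)$ each. The paper justifies the $O(1)$ step solely by assuming the event realizations are stored sorted by probability (so the arg-max is just the first entry), which you also list as one of your two alternative justifications; your additional discussion of the branching-factor caveat is more careful than the paper, but not needed once the sorted-storage assumption is in place.
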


\begin{proof}
    \texttt{ALG-1P} selects the most probable realization for each stochastically known event $\ske \in \skl$, so it performs $\len{\skl}$ selections.
    As the event realizations are sorted by probability, each selection is in $O(1)$, so \texttt{ALG-1P} is bounded by $O(\len{\skl})$.
\end{proof}

\begin{lemma}
\label{lemma:r2p_complexity}
    \texttt{ALG-R2P} is bounded by $O(\len{\skl})$.
\end{lemma}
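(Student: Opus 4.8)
The plan is to bound \texttt{ALG-R2P} by accounting separately for the cost of its single loop and the cost of the final assignment. First I would observe that any log realization $L_q \in R(\skl)$ selects exactly one event realization per stochastically known event, so $\len{L_q} = \len{\skl}$. Since the loop ranges over $L_q \setminus I \subseteq L_q$, it executes at most $\len{\skl}$ iterations.

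Next I would argue that each iteration runs in $O(1)$. The call $\texttt{next}_{\skl}(e_i, \ske)$ returns the next most probable realization of $\ske$ after $e_i$; since (as already exploited in Lemma~\ref{lemma:1p_complexity}) the realizations of each stochastically known event are kept sorted by probability, this is a single step into the sorted list, hence $O(1)$. The test $e_i' \in R(\ske) \setminus O$ is a membership check, which is $O(1)$ provided $O$ is maintained in a structure supporting constant-time lookup (e.g., a hash set); and the remaining operations inside the loop — forming the ratio $\rho_i = \frac{P(e_i')}{P(e_i)}$, comparing it with $\rho_{\mi{max}}$, and updating the bookkeeping variables $\rho_{\mi{max}}$, $e_{\mi{max}}$, $e_{\mi{max}}'$ — are each $O(1)$. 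Multiplying by the iteration count gives $O(\len{\skl})$ for the loop.

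Finally, the construction $L_q^2 = (L_q \setminus \set{e_{\mi{max}}}) \cup \set{e'_{\mi{max}}}$ substitutes a single element of $L_q$; depending on the chosen representation of log realizations this is either $O(1)$ (indexing $L_q$ by event) or at worst $O(\len{\skl})$ (copying the realization with one entry changed). Summing the two contributions yields $O(\len{\skl}) + O(\len{\skl}) = O(\len{\skl})$, which is the claim.

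The main obstacle I anticipate is not the counting but making the per-iteration $O(1)$ bound airtight, since it rests on data-structure assumptions: pre-sorted event realizations so that \texttt{next} is constant time, and constant-time membership testing for $O$. I would therefore state these assumptions explicitly — in line with the assumption of sorted realizations already used in Lemma~\ref{lemma:1p_complexity} — rather than leave them implicit, so that the bound in Lemma~\ref{lemma:r2p_complexity} is rigorous.
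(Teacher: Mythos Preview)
Your argument is correct and mirrors the paper's proof: both bound the loop by $\len{\skl}$ iterations and argue each iteration is $O(1)$ because the per-event realizations are pre-sorted, making $\texttt{next}_{\skl}$ constant time. You add a little extra care (the final substitution step and the explicit data-structure assumptions for membership testing), but the approach is essentially the same.
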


\begin{proof}
    For each stochastically known event in $\ske \in \skl$, so $\len{\skl}$ times, \texttt{ALG-R2P} (1) retrieves the next best event realization with \textbf{$\texttt{next}_L(e_i, \ske)$}, (2) calculates $\rho_i$ ($O(1)$) and (3) compares $\rho_i$ to $\rho_{\mi{max}}$ ($O(1)$).
    Consequently, the \texttt{ALG-R2P} is bounded by $O(\len{\skl} \cdot M)$ where $M$ is the complexity of $\texttt{next}_L(e_i, \ske)$.
    Because the event realizations are ranked by probability, the next best event realization is simply the next element in this events' list of realizations, making $M \in O(1)$. 
    Then, algorithm \texttt{ALG-R2P} is bounded by $O(\len{\skl})$.
\end{proof}

\begin{lemma}
\label{lemma:complexity}
    The top-$K$ realizations algorithm is bounded by $O(K \cdot \len{\skl})$.
\end{lemma}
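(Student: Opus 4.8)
The plan is to charge the entire cost of the procedure to the two auxiliary lemmas by counting how often the BST procedure of~\cite{hamacher_k_1985} invokes \texttt{ALG-1P} and \texttt{ALG-R2P}. Recall the shape of that procedure: it first computes the global best realization $L_1$ with a single call to \texttt{ALG-1P}, and then performs $K-1$ refinement iterations. In each iteration it pops from a priority queue the restricted set $\mathcal{D}_{I,O}$ whose best solution is the most probable realization not yet reported, outputs that solution as the next $L_k$, partitions $\mathcal{D}_{I,O}$ into smaller restricted sets, and for each new set obtains the required best and second best solutions via (restricted applications of) \texttt{ALG-1P} and \texttt{ALG-R2P}, pushing the results back onto the queue.

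The first step is to bound the branching of this partition by a constant. This is exactly where Lemma~\ref{lemma:second_best}, together with the construction of \texttt{ALG-R2P} (which only ever considers single-choice substitutions), is used: the second best solution $L_q^2$ returned in any $\mathcal{D}_{I,O}$ differs from the best $L_q$ in exactly one choice, so the pair $(L_q, L_q^2)$ induces a split into only $O(1)$ restricted sets (the branch that fixes the substituted choice and the branch that forbids it), rather than one branch per differing choice. Consequently the total number of restricted sets created over the entire run, and hence the total number of calls to \texttt{ALG-1P} and \texttt{ALG-R2P}, is $O(K)$.

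The second step is to combine this count with Lemmas~\ref{lemma:1p_complexity} and~\ref{lemma:r2p_complexity}: each of the $O(K)$ auxiliary calls costs $O(\len{\skl})$, for a total of $O(K \cdot \len{\skl})$, to which we add the initial \texttt{ALG-1P} call at $O(\len{\skl})$ and the priority-queue bookkeeping, namely $O(K)$ insertions and extractions on a queue of $O(K)$ realizations with $O(1)$ probability comparisons each, i.e., $O(K \log K)$, which is absorbed into $O(K \cdot \len{\skl})$ since every realization already spans $\len{\skl}$ events. This yields the claimed bound. I expect the only real obstacle to be the constant-branching argument of the first step; everything after it is a routine summation over the previously established lemmas, with the queue overhead being a lower-order term.
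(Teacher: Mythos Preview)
Your proposal is correct, but it takes a more hands-on route than the paper. The paper's proof simply invokes the black-box complexity result of Hamacher and Queyranne~\cite{hamacher_k_1985}, which states that any BST-derived algorithm runs in $O(C(m) + (K-1)\cdot B(m))$ with $C(m)$ the cost of \texttt{ALG-1P} and $B(m)$ the cost of \texttt{ALG-R2P}; substituting Lemmas~\ref{lemma:1p_complexity} and~\ref{lemma:r2p_complexity} immediately gives $O(K\cdot\len{\skl})$. You instead re-derive (a version of) that bound from scratch: you argue constant branching per iteration, count $O(K)$ auxiliary calls, and then explicitly account for the $O(K\log K)$ priority-queue overhead and show it is absorbed.

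Two remarks on the comparison. First, your appeal to Lemma~\ref{lemma:second_best} to force a constant-size split is not strictly needed: the BST procedure of~\cite{hamacher_k_1985} is already a \emph{binary} partitioning scheme regardless of how many choices separate $L_q$ and $L_q^2$, which is precisely why the cited bound has the clean $(K-1)\cdot B(m)$ term. So the ``real obstacle'' you flag in your first step is in fact already handled by the general theorem. Second, your version has the advantage of being self-contained and of making the queue cost explicit (the paper's proof hides this inside the citation), while the paper's version is shorter and avoids re-proving a known result. Either argument establishes the lemma.
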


\begin{proof}
    Hamacher and Queyranne show that the complexity of an algorithm derived from the BST procedure is in $O(C(m) + (K - 1) \cdot B(m))$ where $C(m)$ is the complexity of \texttt{ALG-1P} and $B(m)$ is the complexity of \texttt{ALG-R2P} \cite{hamacher_k_1985}.
    With lemmas \ref{lemma:1p_complexity} and \ref{lemma:r2p_complexity} follows that the full algorithm is in $O(\len{\skl} + (K - 1) \cdot \len{\skl}) = O(K \cdot \len{\skl})$.
\end{proof}

This proof formally shows the efficiency of the algorithm compared to the exponential complexity of a brute force solution.
Notably, the runtime is affected only by the number of events $\len{\skl}$ and the number of realizations to calculate $K$.

\subsection{\textbf{EVAL2}: Sensitivity Analysis}

In EVAL2, we analyze the benefit of top-$K$ rankings over top-1 interpretations.
For this, the effect of different properties of the stochastically known event logs on the distribution of probabilities in the ranking (\textbf{RQ2}) and the variability of the calculated realizations (\textbf{RQ3}) is analyzed.
The implementation of the algorithm and the code to reproduce the evaluation results are available on GitHub\footnote{\url{https://github.com/arvidle/topK_realizations}}.

\subsubsection{Simulation}

To examine the effect of different properties of the input data,
stochastically known event logs are simulated based on a modified version of the simulation procedure from \cite{DBLP:journals/eaai/FelliGMRW23}.
First, $n_{\mi{events}}$ events with random activity labels are generated.
Then, uncertainty is introduced into $\lfloor r \cdot n_{\mi{events}} \rfloor$ randomly selected events.
Each uncertain event is simulated by picking $n_{\mi{act}}$ alternative activity realizations, and assigning probabilities based on the parameter $\beta$.
The probability of the first activity $p_1$ is set to 1, and the following probabilities are generated recursively with $p_{i + 1} = p_i \cdot \beta \cdot \mi{rand}_i$ where each $\mi{rand}_i$ is a uniformly random value in the range $[0.9, 1.1]$. 
Finally, the probability values are normalized to have a sum of 1 for each event.

\subsubsection{Effect of the parameters on the ranking}

This simulation procedure is used to evaluate the effect of a varying degree of uncertainty in the input data on the resulting top-$K$ ranking.
To characterize a top-$K$ ranking, the following measures are used:
\begin{definition}[Measures]
    Let $L_1, \ldots, L_K$ be a top-$K$ ranking.
    Then, $F_K(k) = \sum_{i = 1}^{k} P(L_i)$ is the cumulative probability, and $d_{\mi{avg}} = K^{-1} \cdot \sum_{L \in \set{L_2, \ldots, L_K}}\, \len{L \setminus L_1}$ is the average number of choices different to the best realization.
\end{definition}
Additionally, the run-time of the algorithm $t$ is measured in seconds.
In the following, the effect of the different simulation parameters (properties of the input event logs) and values of $K$ on these measures is evaluated.
This is done systematically by varying each parameter separately while keeping the other parameters fixed.
All tests were executed on a computer with an Apple M2 Pro and 16 GiB memory, and the measures were averaged over 10 runs.

\begin{figure*}[t]
\centering
\subfloat{\includegraphics[width=0.24\textwidth]{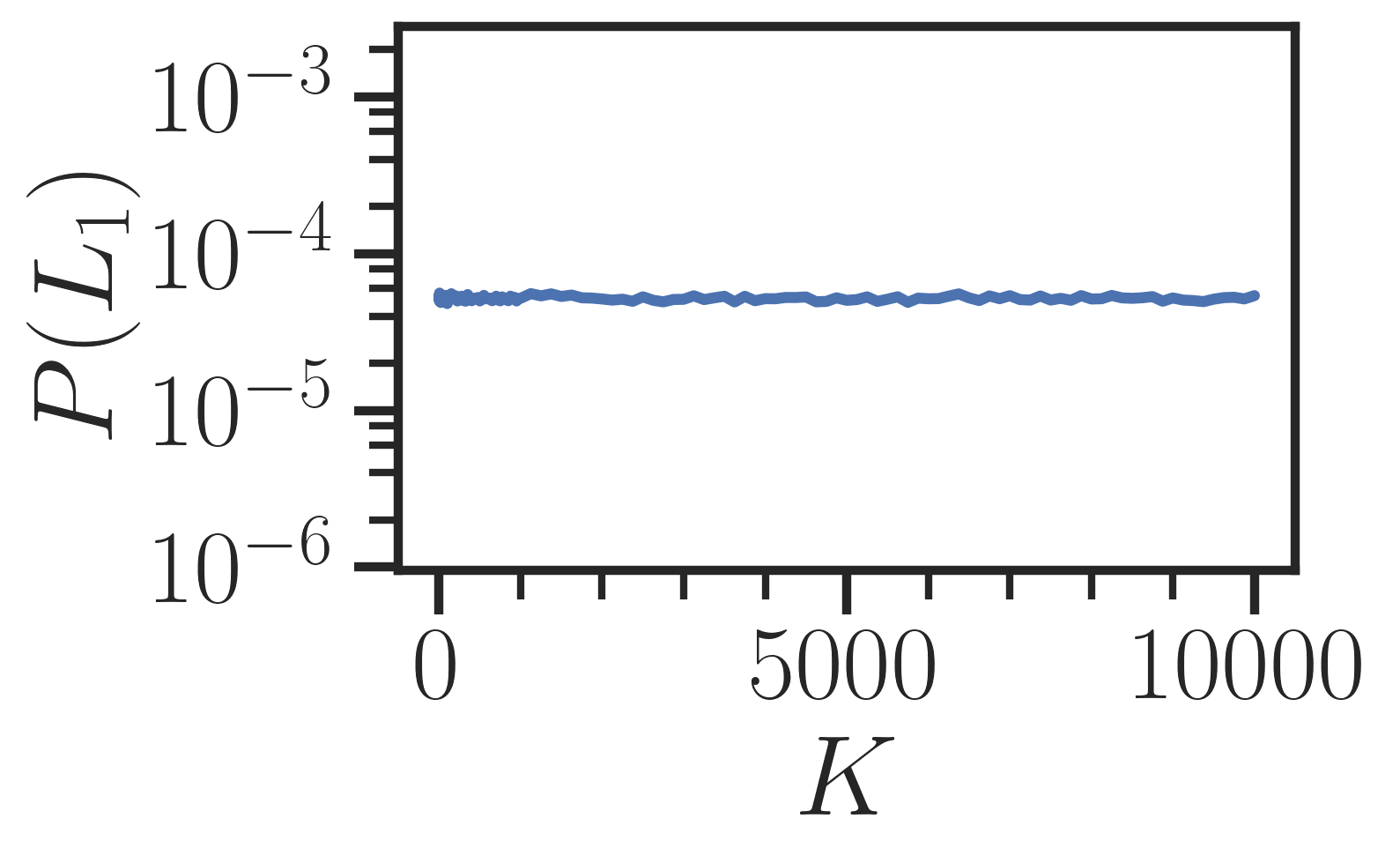}%
\label{fig:eval_k_p1}}
\subfloat{\includegraphics[width=0.24\textwidth]{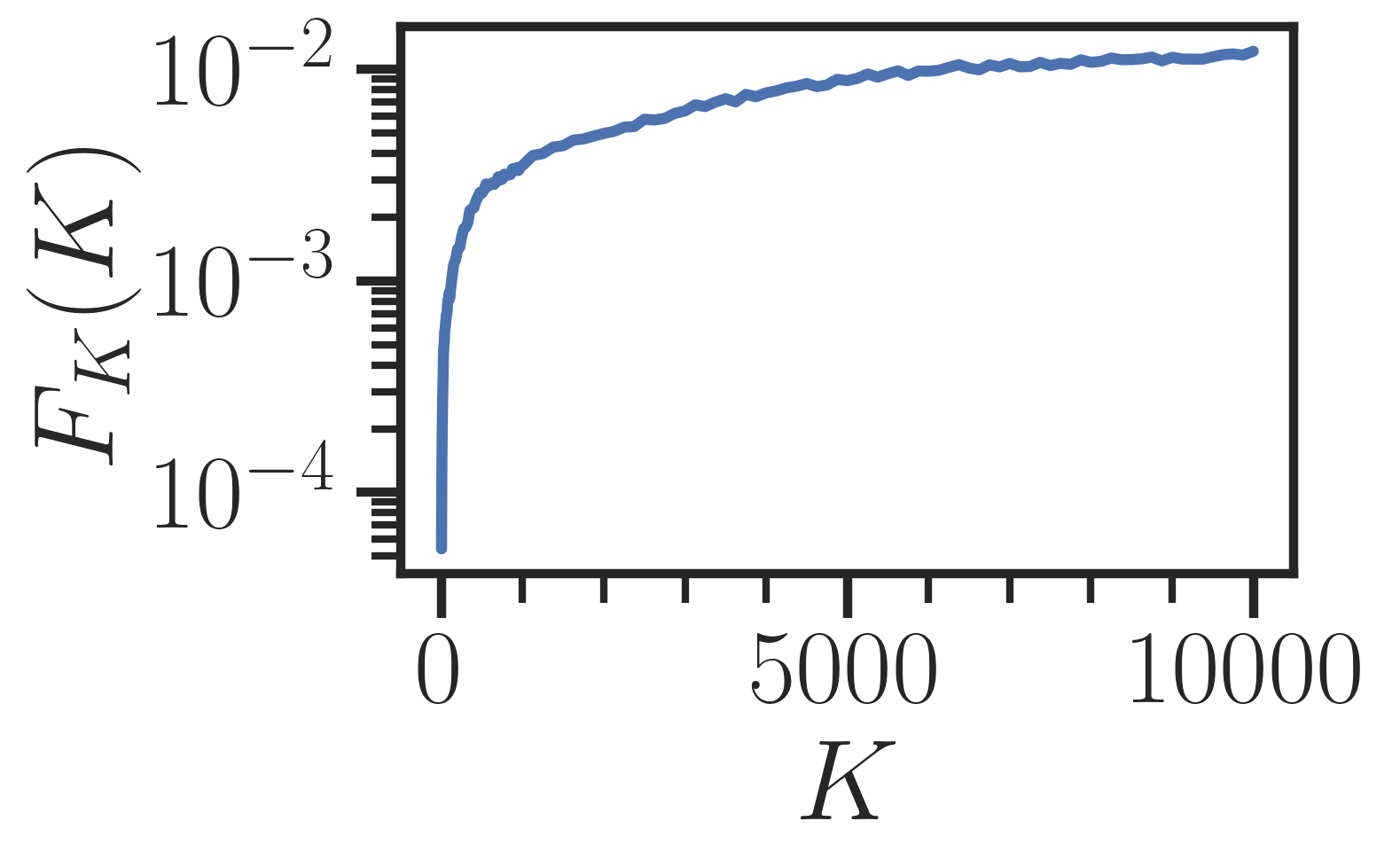}%
\label{fig:eval_ks_cp}}
\subfloat{\includegraphics[width=0.24\textwidth]{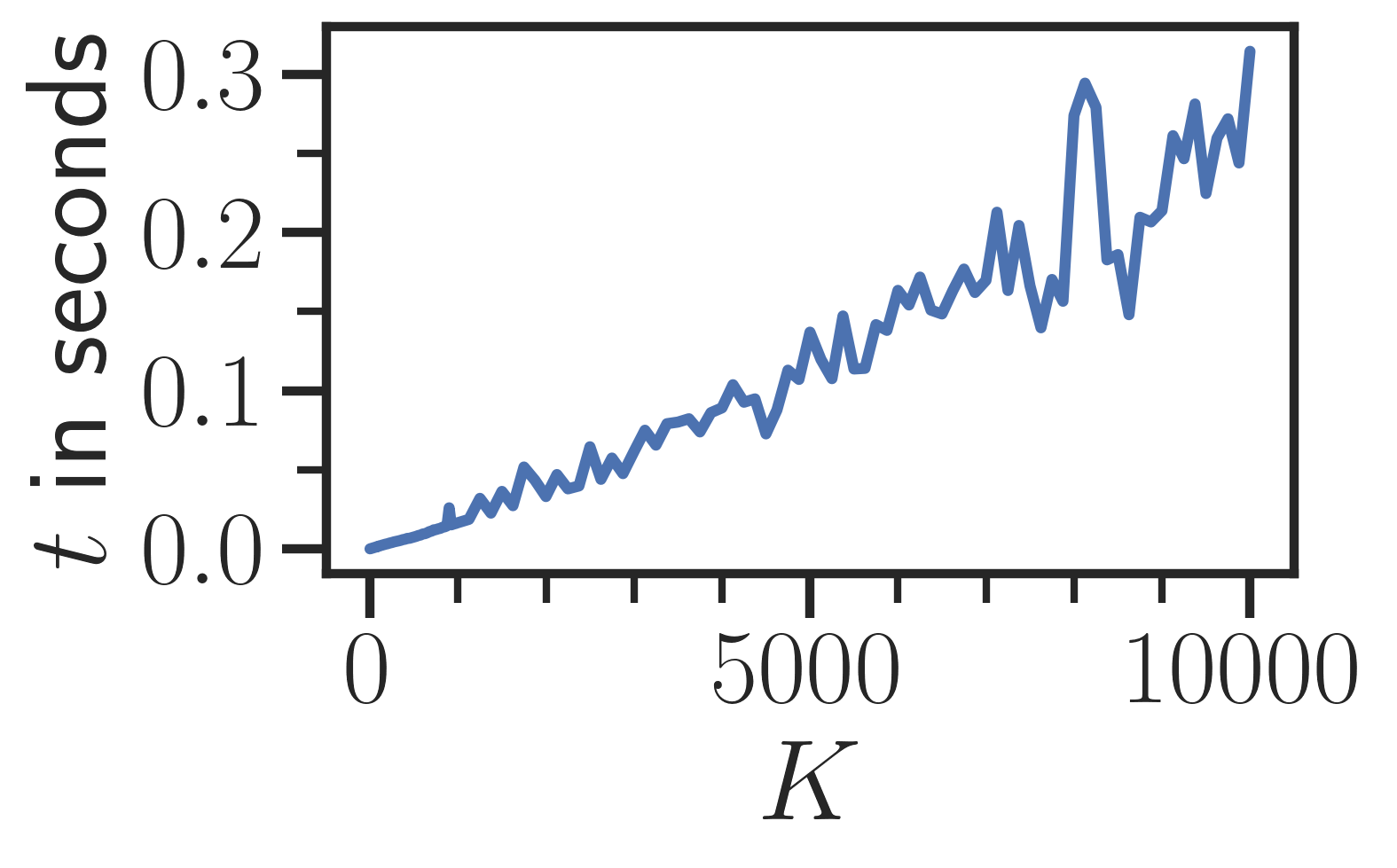}%
\label{fig:eval_k_psi}}
\subfloat{\includegraphics[width=0.24\textwidth]{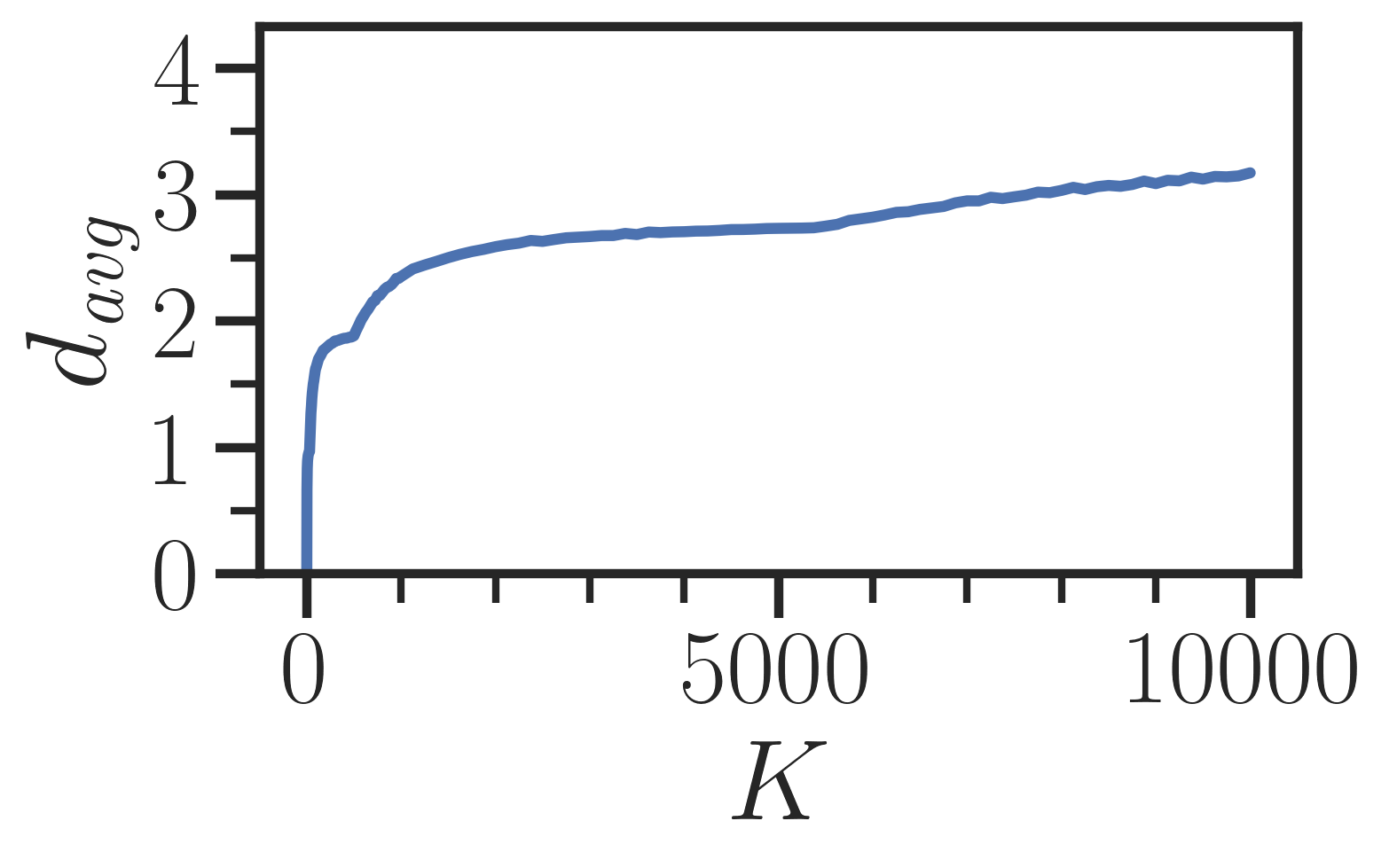}%
\label{fig:eval_k_d_avg}}
\caption{Ranking measures $P(L_1)$, $F_K(K)$, $t$ and $d_{\mi{avg}}$ for varying $K$. ($n_{\mi{events}} = 100,\, r = 0.3,\, n_{\mi{act}} = 3$ and $\beta = 0.3$; log-scaled y-axis for $P(L_1)$ and $F_K(K)$)}
\label{fig:eval_k}
\end{figure*}

First, the effect of $K$ is examined by fixing $n_{\mi{events}} = 100,\, r = 0.3,\, n_{\mi{act}} = 3$ and $\beta = 0.3$ (Fig.~\ref{fig:eval_k}).
Because $K$ is not a simulation parameter, $P(L_1)$ is constant except for small variations due to noise occurring during simulation.
The cumulative probability of the ranking $F_K(K)$ first increases sharply with $K$, but because the probabilities in the ranking are monotonically decreasing, the growth of $F_K(K)$ slows considerably with increasing $K$.
The run-time of the algorithm scales linearly with $K$.
The average difference $d_{\mi{avg}}$ also increases first, but slowly converges to a value just above 3.
At certain points, almost all realizations different to the top-1 realization in 1, 2 and 3 choices are contained in the ranking.
This causes sudden increases of the growth of $d_{\mi{avg}}$.

\begin{figure*}[t]
\centering
\subfloat{\includegraphics[width=0.24\textwidth]{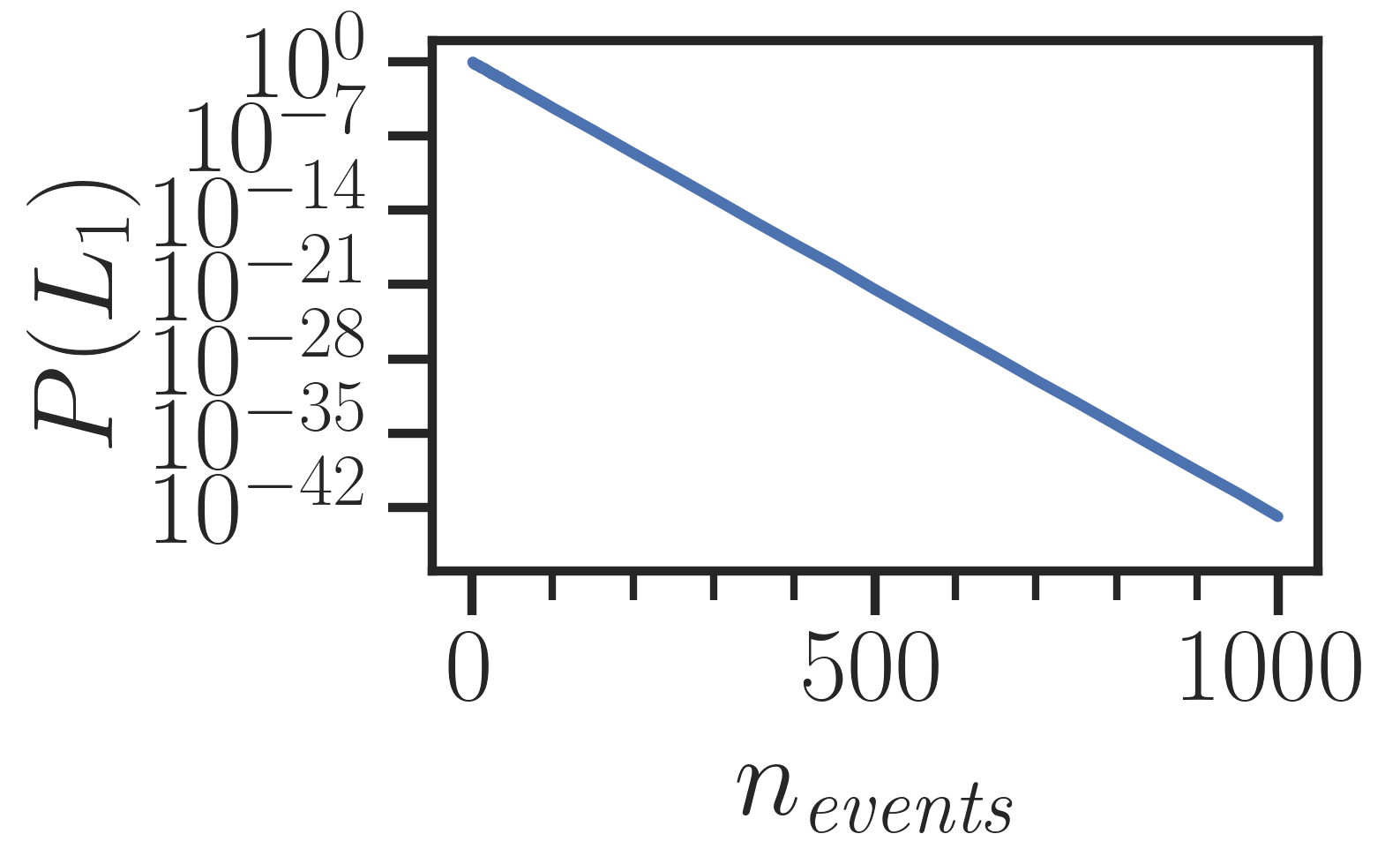}%
\label{fig:eval_n_events_p1}}
\subfloat{\includegraphics[width=0.24\textwidth]{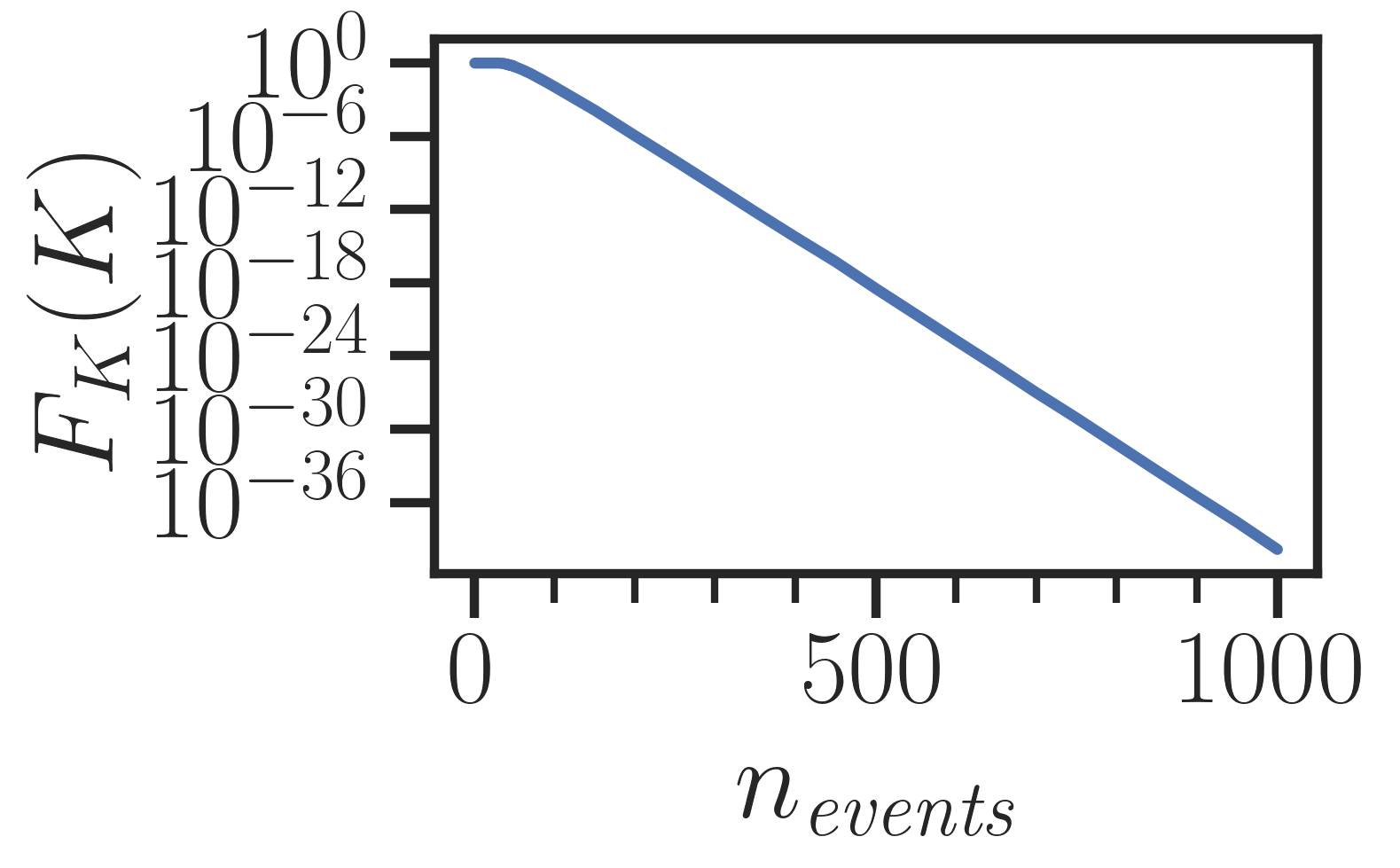}%
\label{fig:eval_n_events_cp}}
\subfloat{\includegraphics[width=0.24\textwidth]{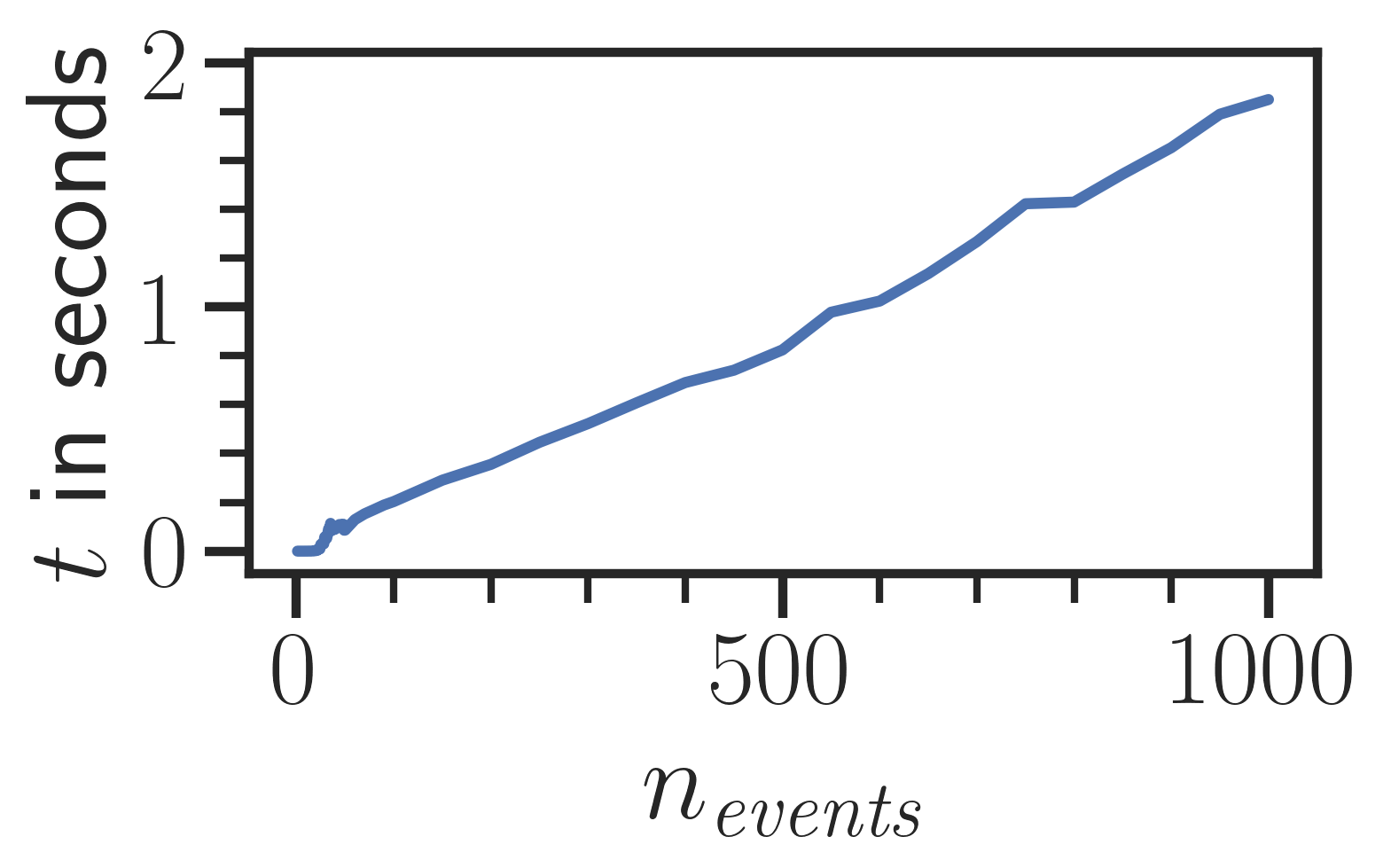}%
\label{fig:eval_n_events_runtime}}
\subfloat{\includegraphics[width=0.24\textwidth]{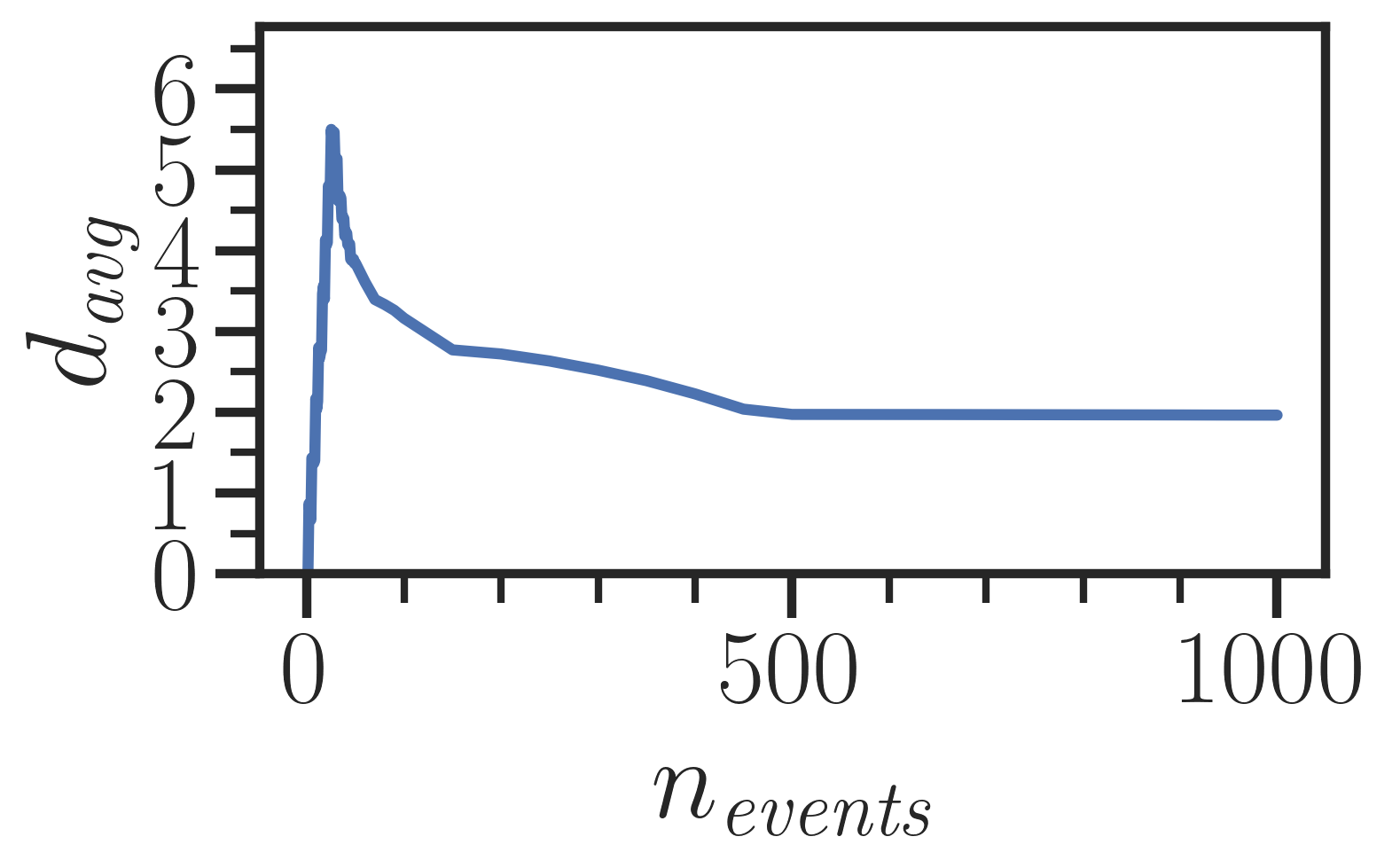}%
\label{fig:eval_n_events_d_avg}}
\caption{Ranking measures $P(L_1)$, $F_K(K)$, $t$ and $d_{\mi{avg}}$ for varying $n_{\mi{events}}$. ($r = 0.3,\, n_{\mi{act}} = 3,\, \beta = 0.3$ and $K = 10^4$; log-scaled y-axis for $P(L_1)$ and $F_K(K)$)}
\label{fig:eval_n_events}
\end{figure*}

Next, the effect of $n_{\mi{events}}$ is evaluated with $n_{\mi{act}} = 3,\, r = 0.3,\, \beta = 0.3$ and $K = 10^4$ (Fig.~\ref{fig:eval_n_events}).
Both the top-1 probability $P(L_1)$ and cumulative probability $F_K(K)$ decrease exponentially with increasing values of $n_{\mi{events}}$.
The runtime $t$ increases linearly, and the top-$10^4$ realizations of an event log with 1000 events are calculated in under 2 seconds.
The average difference $d_{\mi{avg}}$ spikes at $n_{\mi{events}} = 31$, then decreases and plateaus around a value of 2.
A low number of events forces the algorithm to choose more different realizations because the realizations which are similar to $L_1$ are exhausted.
This effect changes if more events are affected by uncertainty since the number of highly similar realizations increases rapidly with event log length.
The increase of the uncertainty threshold $r$ results in the same effects as varying $n_{\mi{events}}$.
Thus, a separate evaluation of the effect of varying $r$ is omitted.

\begin{figure*}[t]
\centering
\subfloat{\includegraphics[width=0.24\textwidth]{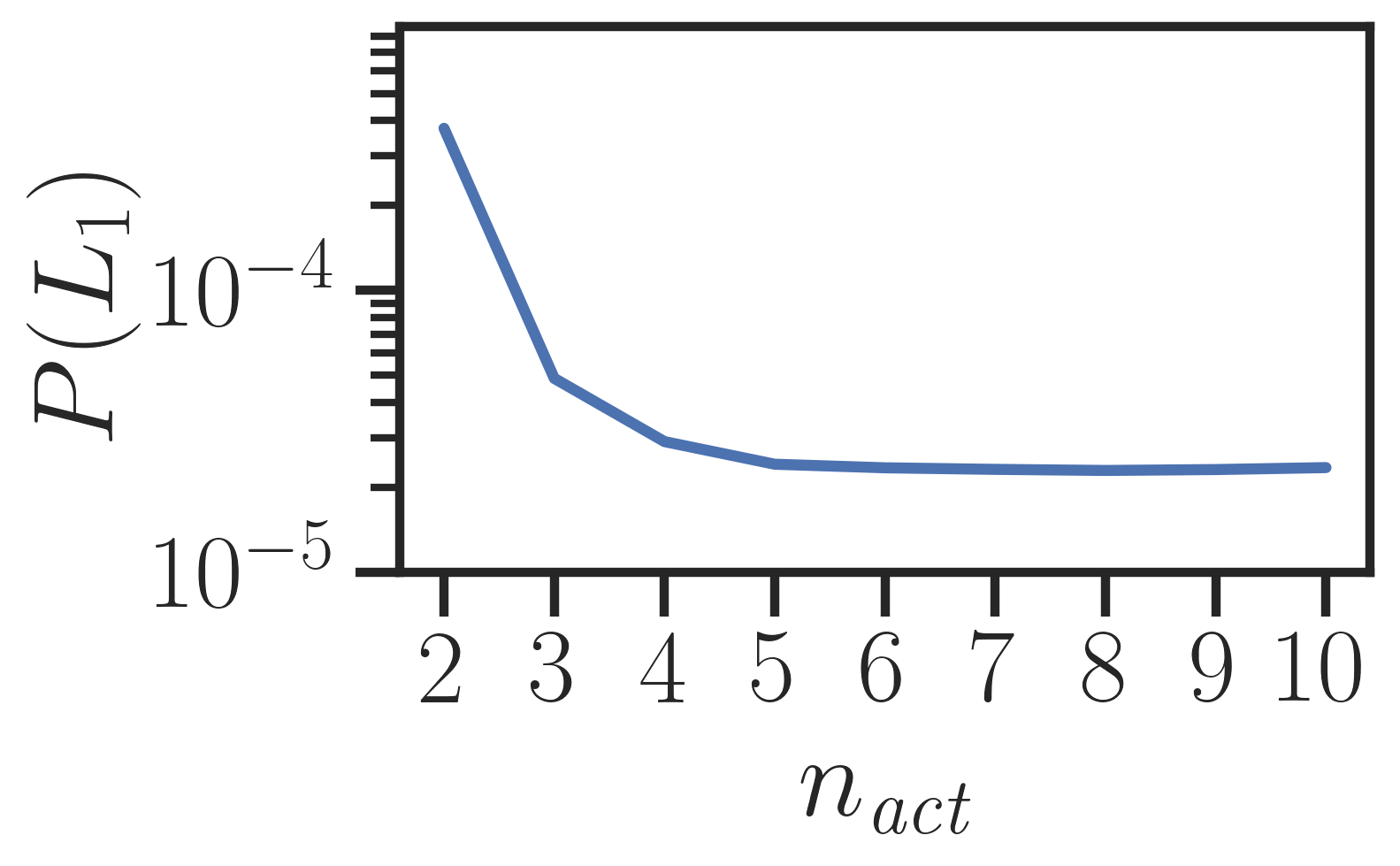}%
\label{fig:eval_n_acts_p1}}
\subfloat{\includegraphics[width=0.24\textwidth]{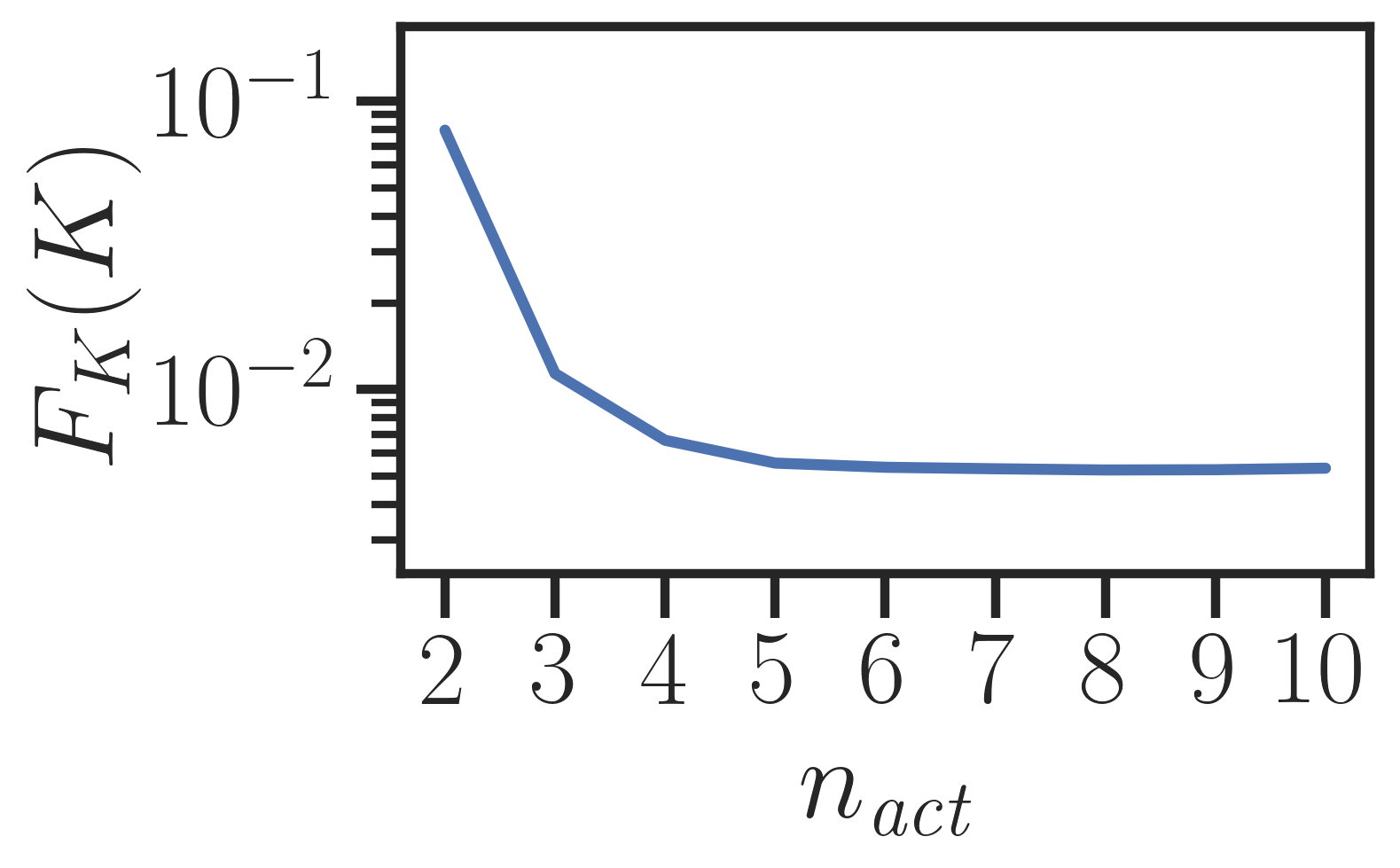}%
\label{fig:eval_n_acts_cp}}
\subfloat{\includegraphics[width=0.24\textwidth]{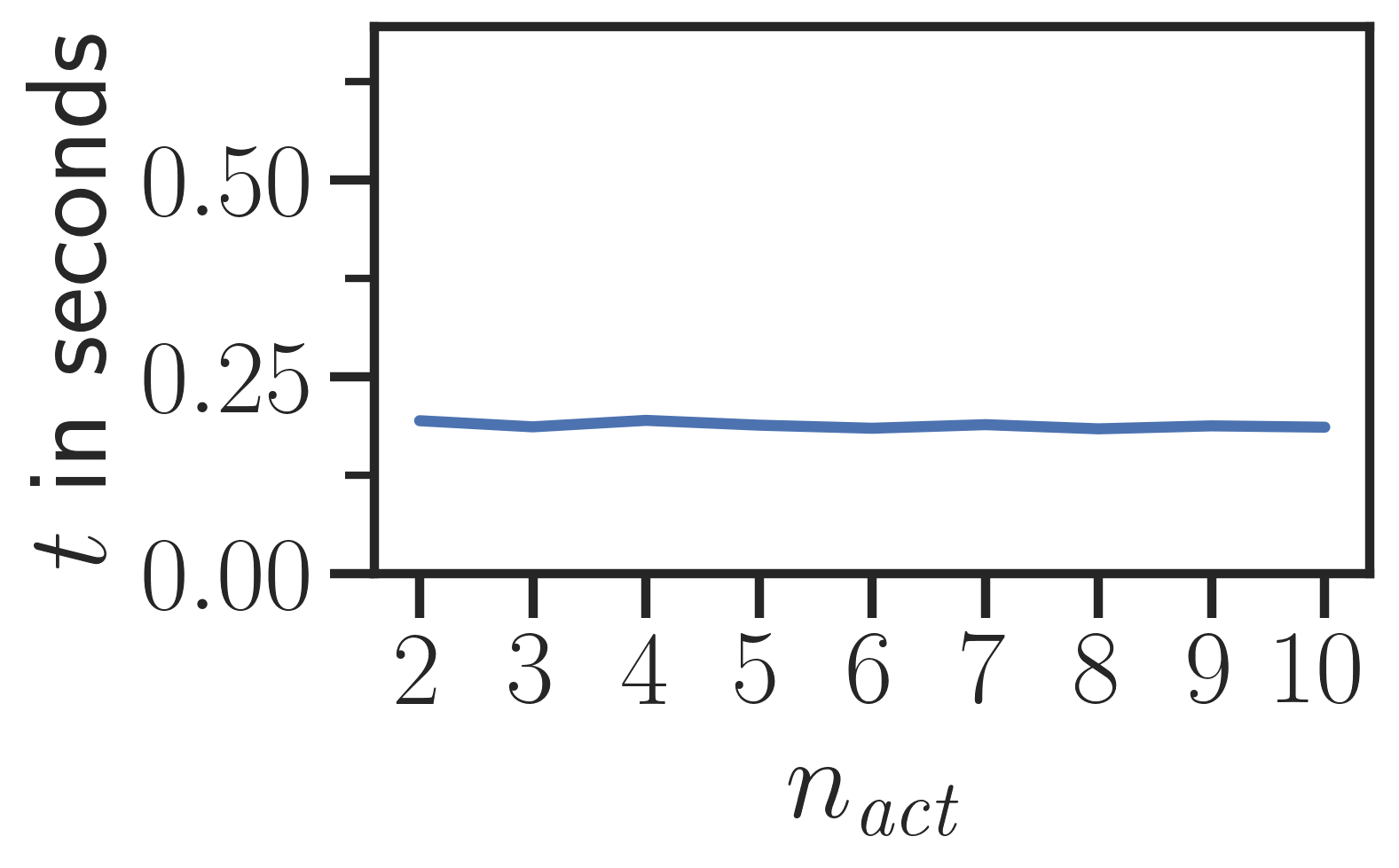}%
\label{fig:eval_n_acts_psi}}
\subfloat{\includegraphics[width=0.24\textwidth]{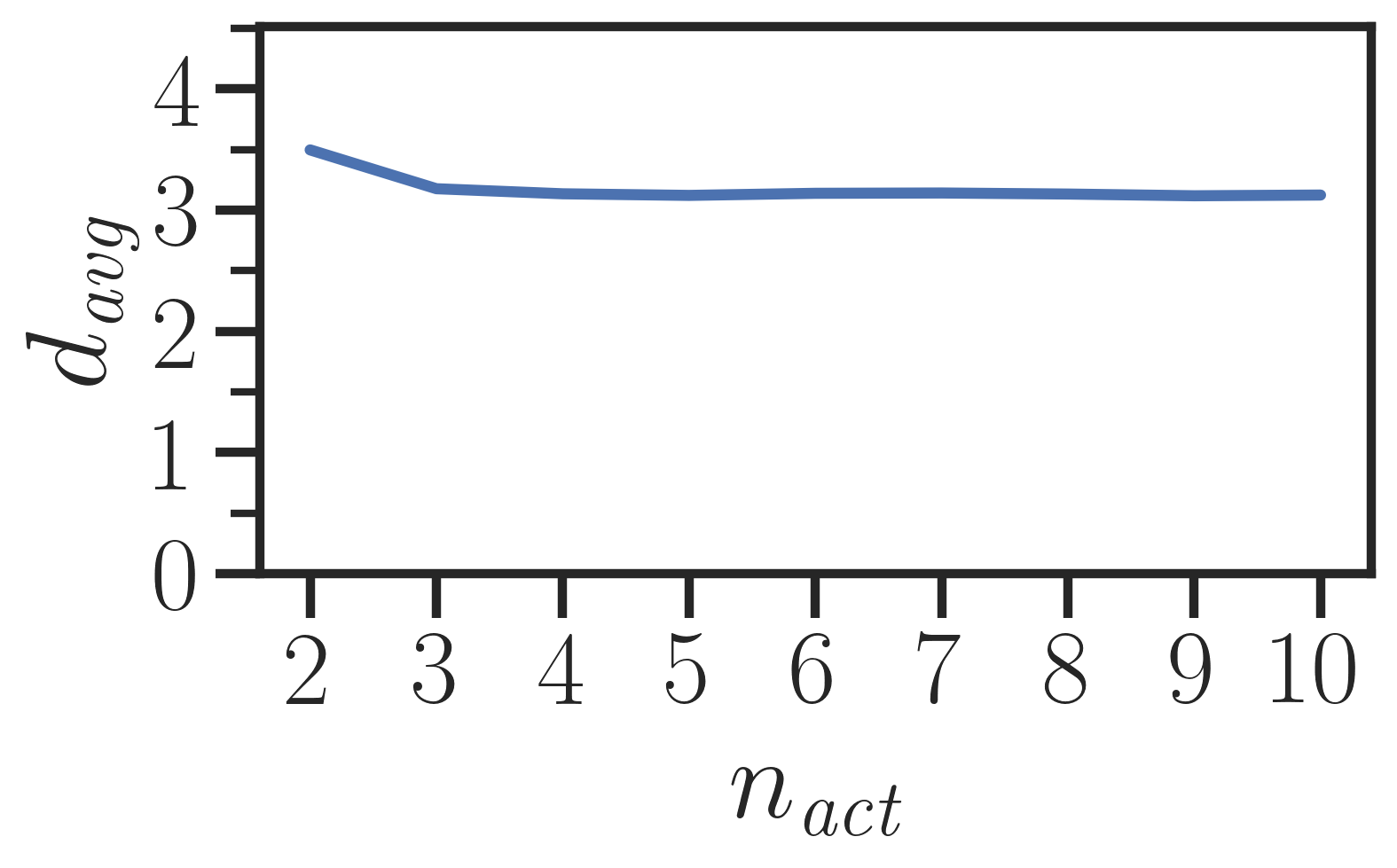}%
\label{fig:eval_n_acts_d_avg}}
\caption{Ranking measures $P(L_1)$, $F_K(K)$, $t$ and $d_{\mi{avg}}$ for varying $n_{\mi{act}}$. ($n_{\mi{events}} = 100,\, r = 0.3,\, \beta = 0.3$ and $K = 10^4$; log-scaled y-axis for $P(L_1)$ and $F_K(K)$)}
\label{fig:eval_n_acts}
\end{figure*}

Then, the number of alternatives for each stochastically known event $n_{\mi{act}}$ is evaluated for $n_{\mi{events}} = 100,\, r = 0.3,\, \beta = 0.3$ and $K = 10^4$.
Both $P(L_1)$ and $F_K(K)$ decrease with increasing values of $n_{\mi{act}}$.
Because the probabilities inside each event must sum up to 1, with higher number of alternatives, the probability is distributed among more alternatives (and the number of possible realizations increases significantly), making all outcomes less probable.
The run-time of the algorithm is constant with respect to $n_{\mi{act}}$.
For lower values of $n_{\mi{act}}$, $d_{\mi{avg}}$ slightly elevates because it is more likely that the alternatives of an event have all been picked and thus another event needs to be changed to generate new candidates for the next best solution.

\begin{figure*}[t]
\centering
\subfloat{\includegraphics[width=0.24\textwidth]{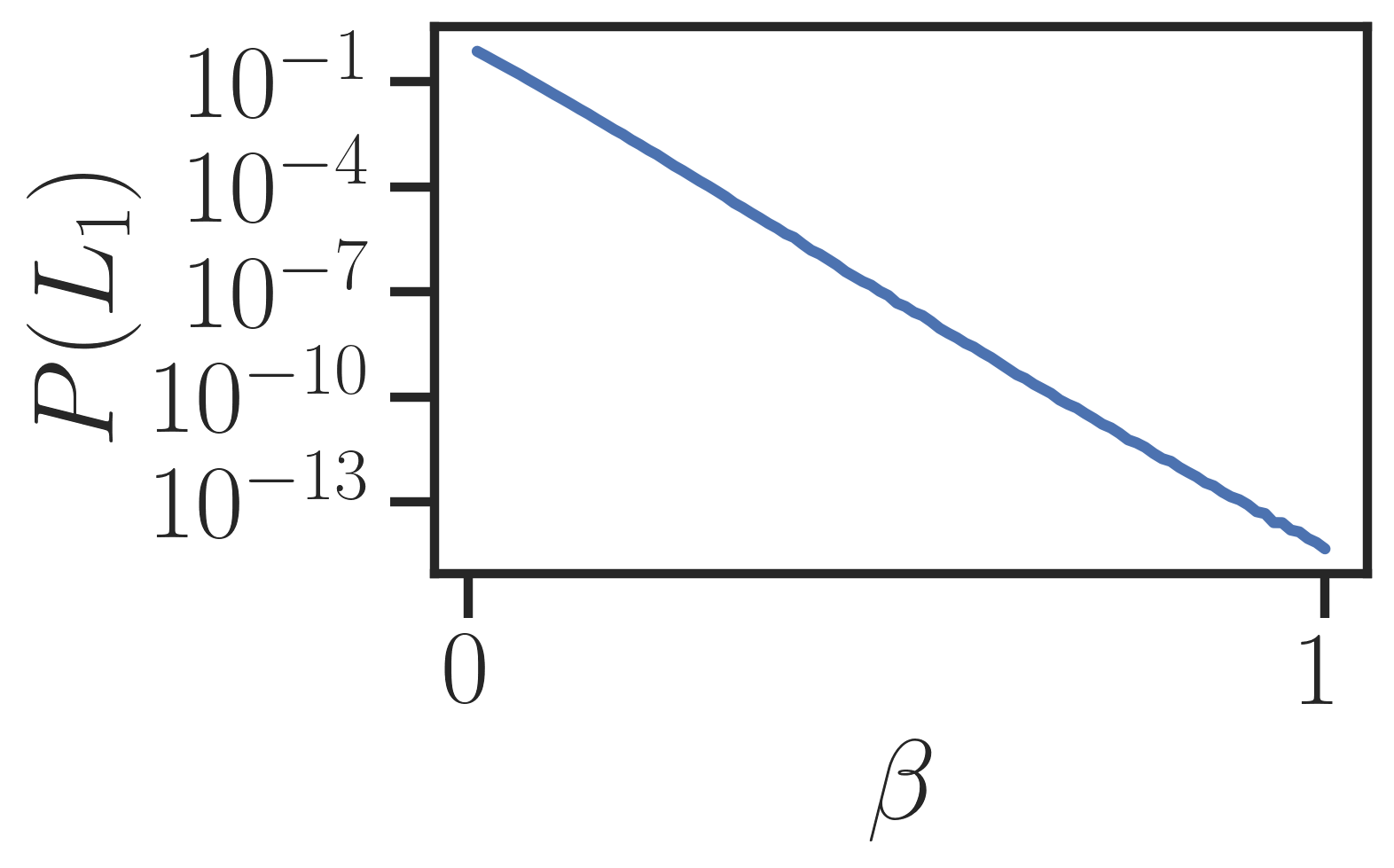}%
\label{fig:eval_beta_p1}}
\subfloat{\includegraphics[width=0.24\textwidth]{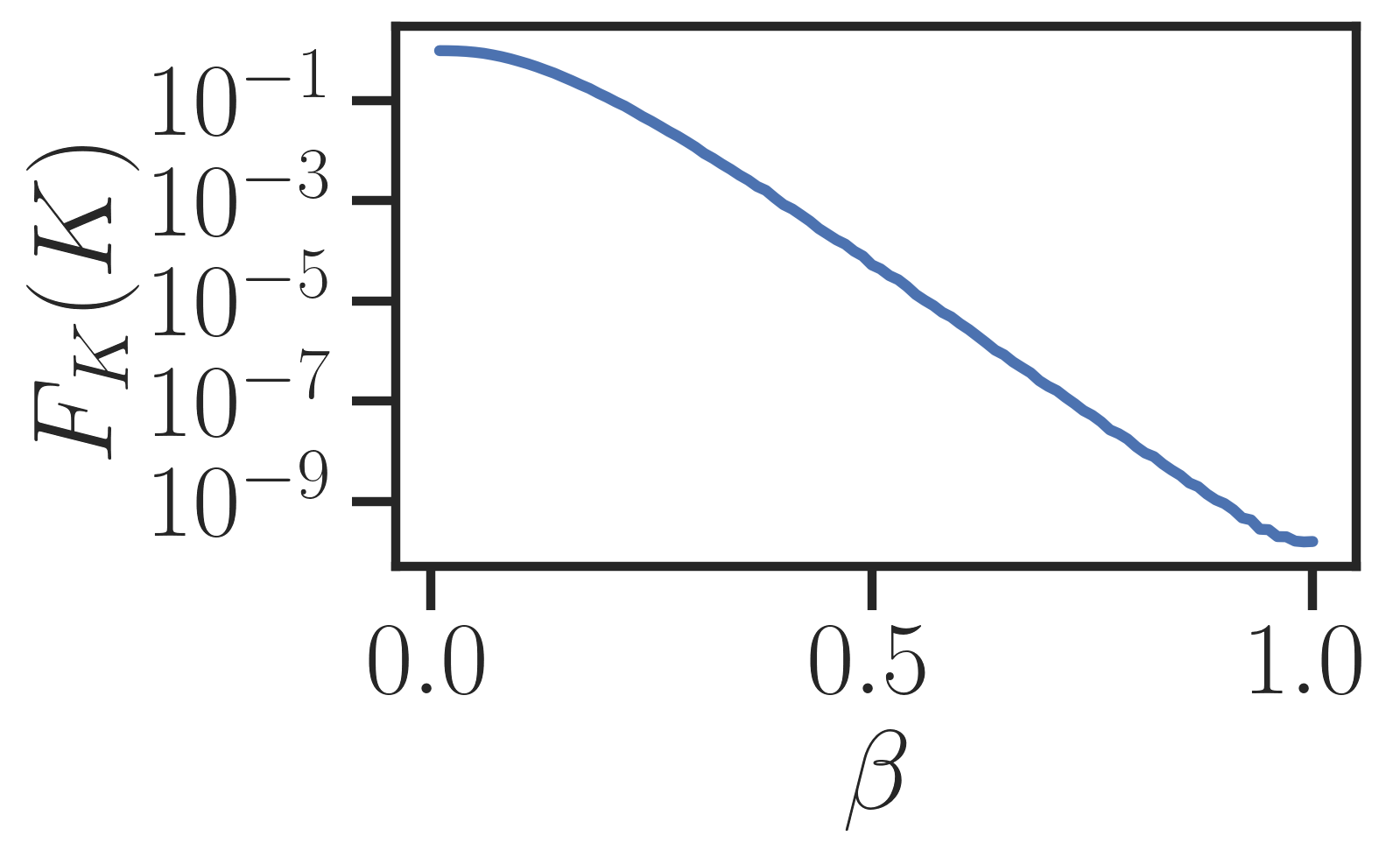}%
\label{fig:eval_beta_cp}}
\subfloat{\includegraphics[width=0.24\textwidth]{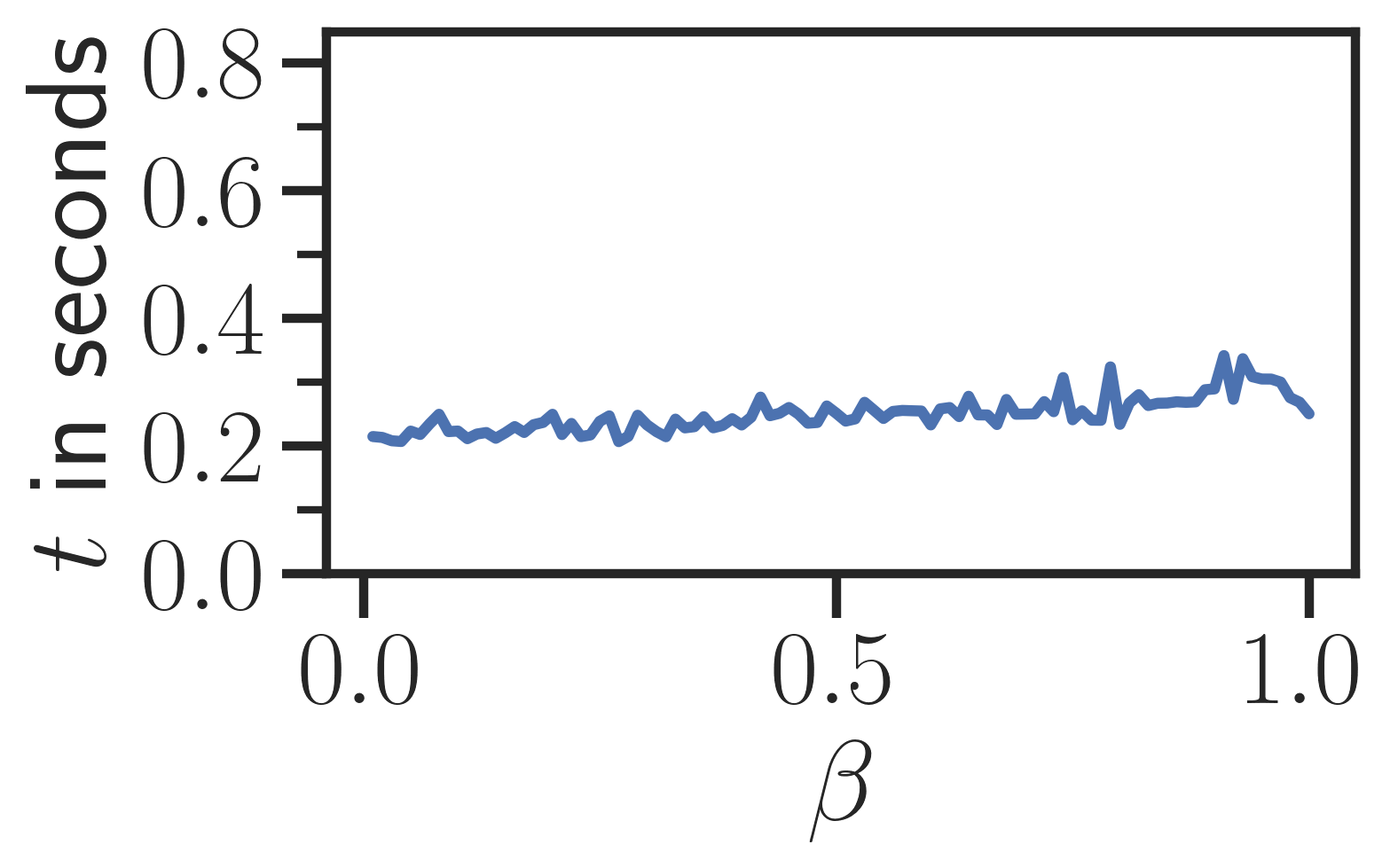}%
\label{fig:eval_beta_runtime}}
\subfloat{\includegraphics[width=0.24\textwidth]{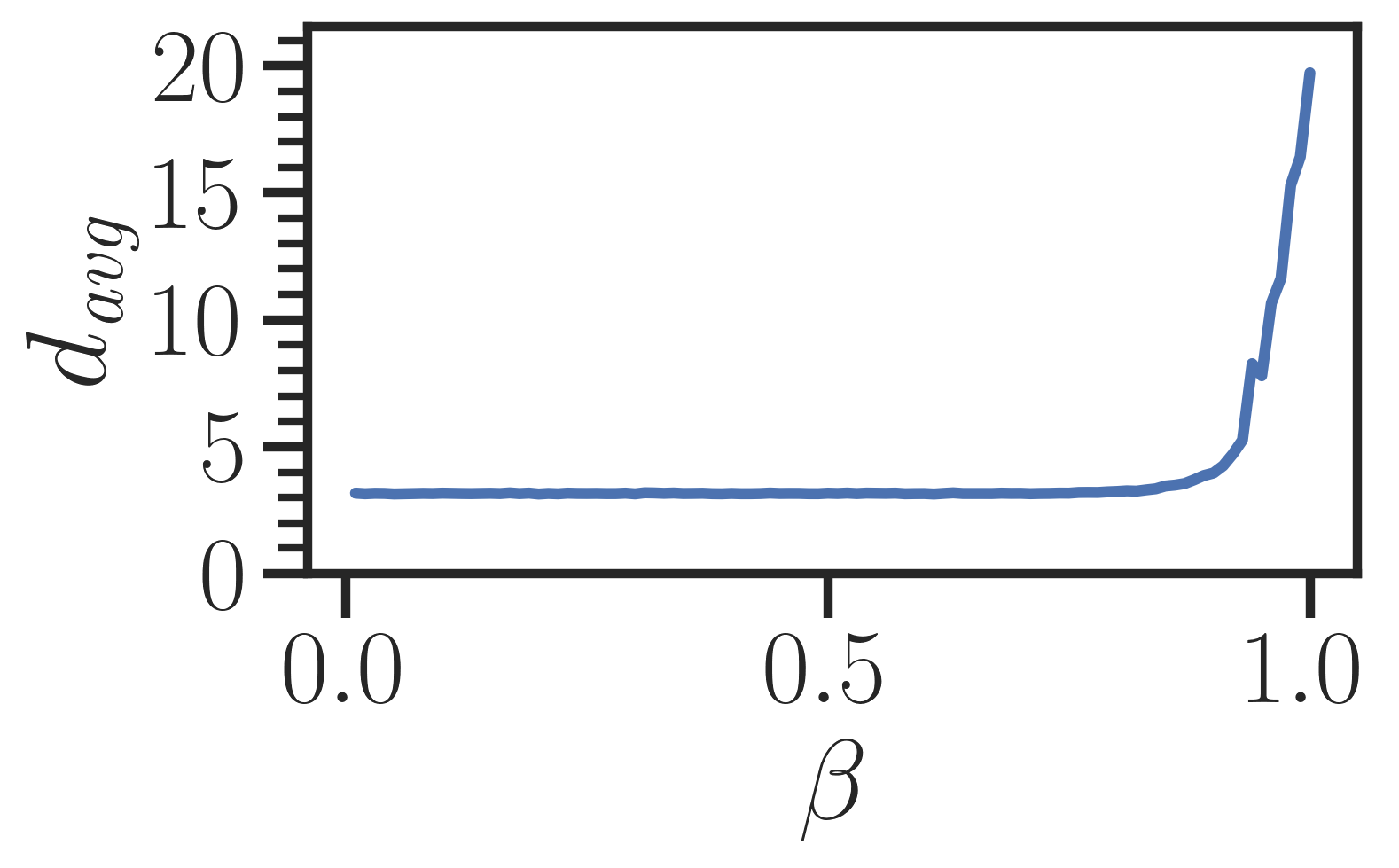}%
\label{fig:eval_beta_d_avg}}
\caption{Ranking measures $P(L_1)$, $F_K(K)$, $t$ and $d_{\mi{avg}}$ for varying $\beta$. ($n_{\mi{events}} = 100,\, r = 0.3,\, n_{\mi{act}} = 3$ and $K = 10^4$; log-scaled y-axis for $P(L_1)$ and $F_K(K)$)}
\label{fig:eval_beta}
\end{figure*}

Finally, the effect of $\beta$ is evaluated with  $n_{\mi{events}} = 100,\, r = 0.3,\, n_{\mi{act}} = 3$ and $K = 10^4$.
$P(L_1)$ and $F_K(K)$ both decrease for increasing values of $\beta$, i.e., less skewed event probability distributions.
The run-time of the algorithm is largely constant with respect to $\beta$.
The average difference $d_{\mi{avg}}$ is unaffected up to $\beta = 0.8$ and then increases sharply.
Because the probability values of the event alternatives are very similar at this point, there are many different realizations having nearly equal probabilities.

\subsection{Discussion}

Overall, the sensitivity analysis confirms that the run-time of the algorithm scales linearly with both $n_{\mi{events}}$ and $K$.
Even for larger optimization problems ($n_{\mi{events}} = 1000$, $K = 10^4$), the algorithm calculates a top-$K$ ranking in around 2 seconds, confirming its efficiency.
To further evaluate the efficiency of the algorithm, it was applied to two uncertain event logs presented in~\cite{engelberg_uncertainty-aware_2023}.
The calculation of the top-$10^4$ realizations took $0.366$ and $0.505$ seconds respectively ($33.5$ and $43.16$ seconds for top-$10^6$).
For comparison, we implemented a baseline approach as described in section~\ref{algorithm}.
While this baseline algorithm performs reasonably well for smaller logs (e.g., $5.7s$ for $n_{\mi{events}} = 40$, $n_{\mi{act}} = 3$, $K = 10^4$), its run-time and memory requirements scale exponentially, making its application infeasible for larger input logs.
In fact, the baseline algorithm was unable to calculate rankings for the logs from~\cite{engelberg_uncertainty-aware_2023} because of memory limitations.
Generally, additional realizations can be produced significantly faster by our algorithm than they can be processed, e.g., using process discovery algorithms.
Thus, the algorithm shows how top-$K$ rankings can be calculated efficiently, giving an answer to \textbf{RQ1}.

Regarding \textbf{RQ2}, we evaluated the top-1 probability $P(L_1)$ and the cumulative probability $F_K(K)$.
Both decrease rapidly with increasing uncertainty and size of the input event log.
Consequently, a direct application of the top-$K$ algorithm to cover a representative set of the realizations by probability is only sensible for small log sizes (see Fig.~\ref{fig:eval_n_events_cp}).
For instance, a top-10,000 ranking of event logs simulated with $n_{\mi{events}} = 50,\, n_{\mi{act}} = 3,\, \beta = 0.3$ and $r = 0.3$ on average covers about 53\% of the realizations by probability while containing only about $0.07\%$ of the $1.43 \cdot 10^7$ possible realizations.
However, even for larger and more uncertain event logs, top-$K$ interpretations are beneficial over the most probable realization, with $F_K(K)$ being consistently larger than $P(L_1)$ by around 3 orders of magnitude for $K = 10^4$.
From this, it can be concluded that the challenge of exponentially decreasing probabilities is not specific to top-$K$ interpretations, but rather a general challenge when handling uncertain event data.
The diminishing returns for $F_K(K)$ with increasing $K$ constitute another challenge -- an increase in uncertainty or size of the input event log cannot be compensated with a proportional increase of $K$.

For \textbf{RQ3}, we observe that low values result for $d_{\mi{avg}}$ ($\leq 4$ for sensible parameters).
Because the realizations of even a small event log (e.g., 100 events) differ only slightly, the information gain of a top-$K$ ranking in terms of the variability of the realizations appears limited for sensible values of $K$.

In summary, the benefit of top-$K$ realizations is most pronounced for smaller event logs and logs with low degrees of uncertainty.
However, even for larger event logs, top-$K$ rankings consistently provide a benefit over top-$1$ interpretations.

\section{Conclusion}
\label{conclusion}

In this paper, we presented an algorithm to compute the top-$K$ most probable realizations of stochastically known event logs. 
We also evaluated the benefit of top-$K$ rankings against top-$1$ interpretations of stochastically known event logs.
We formally proved that our algorithm operates within a computational complexity of $O(K \cdot \len{\skl})$ (see EVAL1), which builds a foundation for future research on uncertainty-aware process mining techniques.
We also showed that top-$K$ interpretations of stochastically known event logs provide a benefit over using the single most probable realization, especially for smaller logs or isolated cases of larger event logs with moderate incidences of uncertainty (see EVAL2). 

To allow wider application of top-K rankings in process mining on uncertain event data, the incorporation of more complex event and trace dependencies into our algorithm might constitute an avenue for future research. 
Secondly, in order to improve the variability of the realizations, the algorithm could be extended with techniques to diversify its outputs~\cite{qin_diversifying_2012}.

%
\bibliographystyle{splncs04}
\bibliography{bibliography}

\begin{thebibliography}{10}
\providecommand{\url}[1]{\texttt{#1}}
\providecommand{\urlprefix}{URL }
\providecommand{\doi}[1]{https://doi.org/#1}

\bibitem{van_der_aalst_process_2012}
van~der Aalst, W., et~al.: Process {Mining} {Manifesto}. In: {BPM} 2011
  {Workshops}. {LNBIP}, vol.~99, pp. 169--194. Springer, Berlin, Heidelberg
  (2012)

\bibitem{bogdanov_sktr_2023}
Bogdanov, E., Cohen, I., Gal, A.: {SKTR}: {Trace} {Recovery} from
  {Stochastically} {Known} {Logs}. In: {ICPM} 2023. pp. 49--56. IEEE, Rome,
  Italy (2023)

\bibitem{engelberg_uncertainty-aware_2023}
Engelberg, G., et~al.: An uncertainty-aware event log of network traffic. In:
  {BPM} 2023 Demos. vol.~3469. CEUR-WS.org (2023)

\bibitem{DBLP:journals/eaai/FelliGMRW23}
Felli, P., et~al.: Multi-perspective conformance checking of uncertain process
  traces: An smt-based approach. Eng. Appl. Artif. Intell.  \textbf{126},
  106895 (2023)

\bibitem{gal_everything_2023}
Gal, A.: Everything there is to {Know} about {Stochastically} {Known} {Logs}.
  In: {ICPM} 2023. pp. xvii--xxiii. IEEE, Rome, Italy (2023)

\bibitem{gao_fast_2010}
Gao, J., et~al.: Fast top-k simple shortest paths discovery in graphs. In:
  {CIKM}'10. pp. 509--518. ACM, Toronto, Canada (2010)

\bibitem{hamacher_k_1985}
Hamacher, H.W., Queyranne, M.: K best solutions to combinatorial optimization
  problems. Annals of Operations Research  \textbf{4}(1),  123--143 (1985)

\bibitem{kecht_event_2021}
Kecht, C., et~al.: Event {Log} {Construction} from {Customer} {Service}
  {Conversations} {Using} {Natural} {Language} {Inference}. In: ICPM 2021. pp.
  144--151. IEEE, Eindhoven, Netherlands (2021)

\bibitem{koschmider_process_2024}
Koschmider, A., et~al.: Process {Mining} for {Unstructured} {Data}:
  {Challenges} and {Research} {Directions}. In: Modellierung 2024. LNI,
  vol.~P348, pp. 119--136. GI, Bonn (2024)

\bibitem{lepsien_analytics_2023}
Lepsien, A., Koschmider, A., Kratsch, W.: Analytics {Pipeline} for {Process}
  {Mining} on {Video} {Data}. In: {BPM} 2023 {Forum}. LNBIP, vol.~490, pp.
  196--213. Springer, Cham (2023)

\bibitem{pascoal_note_2003}
Pascoal, M., Captivo, M.E., Clímaco, J.: A note on a new variant of
  {Murty}’s ranking assignments algorithm. Quarterly Journal of the Belgian,
  French and Italian Operations Research Societies  \textbf{1}(3),  243--255
  (2003)

\bibitem{pegoraro_probabilistic_2022}
Pegoraro, M.: Probabilistic and {Non}-deterministic {Event} {Data} in {Process}
  {Mining}: {Embedding} {Uncertainty} in {Process} {Analysis} {Techniques}. In:
  {CAiSE} 2022 {Doctoral} {Consortium}. {CEUR} {Workshop} {Proceedings},
  vol.~3139, pp. 37--46. CEUR-WS.org, Leuven, Belgium (2022)

\bibitem{pegoraro_probability_2022}
Pegoraro, M., Bakullari, B., Uysal, M.S., van~der Aalst, W.M.P.: Probability
  {Estimation} of {Uncertain} {Process} {Trace} {Realizations}. In: {ICPM}
  {Workshops} 2021. LNBIP, vol.~433, pp. 21--33. Springer, Cham (2022)

\bibitem{pegoraro_conformance_2021}
Pegoraro, M., Uysal, M.S., van~der Aalst, W.M.P.: Conformance {Checking} over
  {Uncertain} {Event} {Data}. Information Systems  \textbf{102},  101810 (2021)

\bibitem{qin_diversifying_2012}
Qin, L., Yu, J.X., Chang, L.: Diversifying top-k results. Proceedings of the
  VLDB Endowment  \textbf{5}(11),  1124--1135 (2012)

\bibitem{soliman_probabilistic_2008}
Soliman, M.A., Ilyas, I.F., Chang, K.C.C.: Probabilistic top-k and
  ranking-aggregate queries. ACM Trans. Datab. Syst.  \textbf{33}(3),
  13:1--13:54 (2008)

\end{thebibliography}
\end{document}